\newtheorem{definition}{Definition}
\newtheorem{lemma}{Lemma}
\newtheorem{theorem}{Theorem}
\newtheorem{observation}{Observation}
\DeclareMathOperator*{\argmax}{arg\,max}
\newcommand{\ra}[1]{\renewcommand{\arraystretch}{#1}}
\DeclareMathOperator{\opt}{OPT}
\renewcommand{\ge}{\geqslant}
\renewcommand{\le}{\leqslant}
\renewcommand{\geq}{\geqslant}
\renewcommand{\leq}{\leqslant}
\newcommand{\eps}{\varepsilon}
\newcommand{\lobebe}{\log(1/\eps)/\eps}
\begin{document}
\title{How to Solve Fair $k$-Center in Massive Data Models} \author{
  Ashish Chiplunkar \\
  Indian Institute of Technology Delhi\\
  \texttt{ashishc@iitd.ac.in}
  \and
  Sagar Kale \\
  University of Vienna\\
  \texttt{sagar.kale@univie.ac.at}
  \and
  Sivaramakrishnan Natarajan Ramamoorthy\\
  University of Washington\\
  \texttt{sivanr@cs.washington.edu}
 }
\date{}

\maketitle

\begin{abstract}
  Fueled by massive data, important decision making is being automated with the help of algorithms, therefore, fairness in algorithms has become an especially important research topic. In this work, we design new streaming and distributed algorithms for the fair $k$-center problem that models fair data summarization.  The streaming and distributed models of computation have an attractive feature of being able to handle massive data sets that do not fit into main memory. Our main contributions are: (a) the first distributed algorithm; which has provably constant approximation ratio and is extremely parallelizable, and (b) a two-pass streaming algorithm with a provable approximation guarantee matching the best known algorithm (which is not a streaming algorithm). Our algorithms have the advantages of being easy to implement in practice, being fast with linear running times, having very small working memory and communication, and outperforming existing algorithms on several real and synthetic data sets. To complement our distributed algorithm, we also give a hardness result for natural distributed algorithms, which holds for even the special case of $k$-center.  
\end{abstract}

%%% Local Variables:
%%% mode: latex
%%% TeX-master: "main"
%%% End:

\section{Introduction}
Data summarization is a central problem in the area of machine learning, where
we want to compute a small summary of the data.  For example, if the input data
is enormous, we do not want to run our machine learning algorithm on the whole
input but on a small \emph{representative} subset.  How we select such a
representative summary is quite important.  It is well known that if the input is biased, then the
machine learning algorithms trained on this data will exhibit the same bias.
This is a classic example of \emph{selection bias} but as exhibited by
algorithms themselves.  Currently used algorithms for data summarization have
been shown to be biased with respect to attributes such as gender, race, and age
(see, e.g., \cite{kay_etal15}), and this motivates the fair data summarization
problem.  Recently, the fair $k$-center
problem was shown to be useful in computing fair summary~\cite{kleindessner_etal19a}.
In this paper, we continue the study of
fair $k$-center and add to the series of works on fairness in machine learning
algorithms.  Our main results are streaming and distributed algorithms for fair
$k$-center.  These models are extremely suitable for handling massive datasets.
The fact that data summarization problem arises when the input is huge makes our
work all the more relevant!

Suppose the input is a set of real vectors with a gender attribute and you want to compute a
summary of $k$ data points such that both\footnote{sincere apologies
  to the people who identify with neither} genders are represented equally.  Say
we are given a summary $S$.  The cost we pay for not including a point in $S$ is
its Euclidean distance from $S$.  Then the cost of $S$ is the
largest cost of a point.  We want to compute a summary with minimum cost that is
also fair, i.e., contains $k/2$ women and $k/2$ men.  In one sentence, we want
to compute a fair summary such that the point that is farthest from this summary
is not too far.  Fair $k$-center models this task: let the number of points
in the input be $n$, the number of \emph{groups} be $m$, target summary size be
$k$, and we want to select a summary $S$ such that $S$ contains $k_j$ points belonging to Group~$j$, where $\sum_j k_j = k$.
And we want to minimize $\max_x d(x, S) = \max_x \min_{x'\in S} d(x, x')$, where $d$ denotes the distance function.  Note that each point
belongs to exactly one of the $m$ groups; for the case of gender, $m = 2$.

We call the special case 
%of (unfair) $k$-center 
where
  $m = 1$ and $k_1 = k$ as just $k$-center throughout this paper.
For $k$-center, there are simple greedy algorithms with an approximation ratio of $2$~\cite{gonzalez85,hochbaum85}, and
 getting better than $2$-approximation is
NP-hard~\cite{hsu79}.  The NP-hardness result also applies to the more general fair $k$-center.
The best algorithm known for fair $k$-center is a
$3$-approximation algorithm that runs in time $O(n^2\log n)$~\cite{chen_etal16}.
A linear-time algorithm with approximation guarantee of $O(2^m)$, which is
constant if $m$ is, was given recently~\cite{kleindessner_etal19a}.  Both of
these algorithms work only in the traditional random access machine model, which
is suitable only if the input is small enough to fit into fast memory.  We give
a two-pass streaming algorithm that achieves the approximation ratio arbitrarily
close to $3$.  In the streaming setting, input is thought to arrive one point at
a time, and the algorithm has to process the input quickly, using minimum amount
of working memory---ideally linear in the size of a feasible solution, which is
$k$ for fair $k$-center.  Our algorithm processes each incoming input point in
$O(k)$ time and uses space $O(km)$, which is $O(k)$ if the number of groups $m$
is very small.  This improves the space usage of the existing streaming
algorithm~\cite{kale19} almost quadratically, from $O(k^2)$, while also
matching the best approximation ratio achieved by Chen et al.  We also give the first
distributed, constant approximation algorithm where the input is divided among
multiple processors, each of which performs one round of computation and sends a message
of size $O(km)$ to a central processor, which then computes the final solution.  Both
rounds of computation are linear time.  All the approximation, communication,
space usage, and running-time guarantees are provable.  To complement our
distributed algorithm, we prove that any distributed algorithm, even randomized, that works by
each processor sending a subset of its input to a central processor which outputs
the solution, needs to essentially communicate the whole input to achieve an approximation ratio of better than
$4$.  This, in fact, applies for the special case of $k$-center showing that
known $4$-approximation algorithm~\cite{malkomes1_etal15} for $k$-center is
optimal.

We perform experiments on real and synthetic datasets and show that our
algorithms are as fast as the linear-time algorithm of Kleindessner et al.,
while achieving improved approximation ratio, which matches that of Chen et al.
Note that this comparison is possible only for small datasets, since those
algorithms do not work either in streaming or in distributed setting.  We also
run our algorithms on a really large synthetic dataset of size 100GB, and show that
their running time is only one order of magnitude more than the time taken to just read the input dataset
from secondary memory.

As a further contribution, we give faster implementations of existing
algorithms---those of Kale and Chen et al.

\subsection*{Related work}
Chen et al.\ gave the first polynomial-time algorithm that achieves $3$-approximation.  Kale achieves almost the same ratio using just two passes and also gives a one-pass $(17 + \eps)$-approximation algorithm, both using
$O(k^2)$ space.

One way that is incomparable to ours is to compute a fair summary is using a determinantal
measure of diversity~\cite{celis_etal18a}.  Fair clustering has been studied
under another notion of fairness, where each cluster must be balanced with
respect to all the groups (no over-or-under-representation of any
group)~\cite{chierichetti_etal17}, and this line of work also has received a lot
of attention in a short span of
time~\cite{BeraCFN19,AhmadianE0M19,bandyapadhyay_et_al19,schmidt_etal20,jia_etal20}.

The $k$-median clustering problem with fairness constraints was first considered
by~\cite{hajiaghayi_etal10} and with more general matroid constraints was
studied by~\cite{krishnaswamy_etal11}.  The work of Chen et al.\ and Kale also
actually applies for matroid constraints.

There has been a lot of work done on fairness, and we
refer the reader to overviews by~\cite{kleindessner_etal19a,celis_etal18a}.

%%% Local Variables:
%%% mode: latex
%%% TeX-master: "main"
%%% End:

\section{Preliminaries}
\label{sec:prelim}
The input to fair $k$-center is a set $X$ of $n$ points in a metric space given by a distance function $d$. We denote this metric space by $(X,d)$. Each point belongs to one of $m$ groups, say $\{1,\ldots,m\}$. Let $g:X\longrightarrow\{1,\ldots,m\}$ denote this group assignment function. Further, for each group $j$, we are given a capacity $k_j$. Let $k=\sum_{j=1}^mk_j$. We call a subset $S\subseteq X$ \textit{feasible} if for every $j$, the set $S$ contains at most $k_j$ points from group $j$. The goal is to compute a feasible set of centers that (approximately) minimizes the clustering cost, formally defined as follows.

\begin{definition}\label{def_cost}
Let $A,B\subseteq X$, then the \textit{clustering cost} of $A$ for $B$ is defined as $\max_{b\in B}\min_{a\in A}d(a,b)$.
\end{definition}

Note here that we allow $A$ to not be a subset of $B$. The following lemmas follow easily from the fact that the distance function $d$ satisfies the triangle inequality.

\begin{lemma}\label{lem_cost_triangle}
Let $A,B,C\subseteq X$. The clustering cost of $A$ for $C$ is at most the clustering cost of $A$ for $B$ plus the clustering cost of $B$ for $C$.
\end{lemma}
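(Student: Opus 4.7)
The plan is to unfold the definition of clustering cost and then, for an arbitrary point $c \in C$, construct a short path from $c$ to some point of $A$ that routes through $B$. Specifically, I would fix an arbitrary $c \in C$ and let $b^\star \in B$ be a point achieving $\min_{b \in B} d(b,c)$; then let $a^\star \in A$ be a point achieving $\min_{a \in A} d(a, b^\star)$. The goal is to bound $\min_{a\in A} d(a,c)$ from above using these two witnesses.

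Next, I would apply the triangle inequality to the triple $(a^\star, b^\star, c)$ to get
\[
\min_{a \in A} d(a,c) \;\le\; d(a^\star, c) \;\le\; d(a^\star, b^\star) + d(b^\star, c).
\]
By the choice of $a^\star$, the term $d(a^\star, b^\star) = \min_{a\in A} d(a, b^\star)$ is at most $\max_{b \in B}\min_{a\in A} d(a,b)$, which is the clustering cost of $A$ for $B$. Similarly, by the choice of $b^\star$, the term $d(b^\star, c) = \min_{b\in B} d(b,c)$ is at most $\max_{c' \in C}\min_{b\in B} d(b,c')$, which is the clustering cost of $B$ for $C$. Combining these gives the desired bound for the fixed $c$.

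Finally, since the right-hand side does not depend on $c$, I would take the maximum over $c \in C$ on the left-hand side to obtain exactly $\max_{c \in C}\min_{a \in A} d(a,c) \le \text{cost}(A,B) + \text{cost}(B,C)$, matching Definition~\ref{def_cost}. There is no real obstacle here: the only subtlety worth flagging is that Definition~\ref{def_cost} does not require $A \subseteq B$ (as the text explicitly notes), so the triangle inequality applies directly in the ambient metric $(X,d)$ without any containment assumption on $A$, $B$, or $C$.
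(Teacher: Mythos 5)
Your proof is correct and uses the same core idea as the paper's: unfold Definition~\ref{def_cost} and chain the triangle inequality through a witness in $B$. If anything, your version is the more careful one --- the paper's appendix proof phrases the chain as ``for every $a\in A$ there exists $b\in B$ \dots there exists $c\in C$,'' which has the quantifiers pointing the wrong way relative to Definition~\ref{def_cost} (the max is over the covered set, the min over the covering set), whereas you correctly start from an arbitrary $c\in C$ and work back to a nearby point of $A$.
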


\begin{lemma}\label{lem_pigeonhole}
Suppose for a set $T$ of points there exists a set $S$ of $k$ centers, not necessarily a subset of $T$, whose clustering cost for $T$ is at most $\rho$. If $P\subseteq T$ is a set of points separated pairwise by distance more than $2\rho$, then $|P|\leq k$.
\end{lemma}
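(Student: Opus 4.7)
The plan is a standard pigeonhole argument, which is exactly what the name of the lemma suggests. I would define a map $\phi : P \to S$ by sending each $p \in P$ to some center $\phi(p) \in S$ that achieves $d(p, \phi(p)) = \min_{s \in S} d(p, s)$, breaking ties arbitrarily. Since $P \subseteq T$ and the clustering cost of $S$ for $T$ is at most $\rho$, we have $d(p, \phi(p)) \leq \rho$ for every $p \in P$.

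The main step is to show that $\phi$ is injective. Suppose for contradiction that two distinct points $p, p' \in P$ satisfy $\phi(p) = \phi(p') = s$. Then by the triangle inequality,
\[
d(p, p') \leq d(p, s) + d(s, p') \leq \rho + \rho = 2\rho,
\]
which contradicts the assumption that points of $P$ are pairwise separated by distance strictly greater than $2\rho$. Hence $\phi$ is injective, and $|P| \leq |S| = k$.

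There is no real obstacle here; the only thing to be careful about is that the separation in the hypothesis is strict ($>2\rho$) while the clustering cost bound is non-strict ($\leq \rho$), which is exactly what makes the triangle inequality argument go through without any slack issues. The lemma and its proof do not use the feasibility/group structure at all, only that $S$ has size $k$, so it applies verbatim to both plain $k$-center and fair $k$-center.
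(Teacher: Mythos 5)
Your proof is correct and is essentially the same pigeonhole argument as the paper's: the paper phrases it contrapositively (if $|P|>k$, two points of $P$ share a center within distance $\rho$, contradicting the $>2\rho$ separation via the triangle inequality), while you phrase it as injectivity of the nearest-center map. The content is identical.
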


\begin{proof}
If $|P|>k$ then some two points in $P$ must share one of the $k$ centers, and must therefore be both within distance $\rho$ from that common center. Then by the triangle inequality, they cannot be separated by distance more than $2\rho$.
\end{proof}

We denote by $S^*$ a feasible set which has the minimum clustering cost for $X$, and by $\text{OPT}$ the minimum clustering cost. We assume that our algorithms have access to an estimate $\tau$ of $\text{OPT}$. When $\tau$ is at least $\text{OPT}$, our algorithms compute a solution of cost at most $\alpha\tau$ for a constant $\alpha$. Thus, when $\tau\in[\text{OPT},(1+\varepsilon)\text{OPT}]$, our algorithms compute a $(1+\varepsilon)\alpha$-approximate solution. In Section~\ref{subsec:guesses} we describe how to efficiently compute such a $\tau$.

\section{Algorithms}
%\section{Basic Procedures}

Before stating algorithms, we describe some elementary procedures which will be used as subroutines in our algorithms.

\textbf{\texttt{getPivots}}$(T,d,r)$ takes as input a set $T$ of points with distance function $d$ and a radius $r$. Starting with $P=\emptyset$, it performs a single pass over $T$. Whenever it finds a point $q$ which is not within distance $r$ from any point in $P$, it adds $q$ to $P$. Finally, it returns $P$. Thus, $P$ is a maximal subset of $T$ of points separated pairwise by distance more than $r$. We call points in $P$ \textit{pivots}. By Lemma~\ref{lem_pigeonhole}, if there is a set of $k$ points whose clustering cost for $T$ is at most $r/2$, then $|P|\leq k$. Moreover, due to maximality of $P$, its clustering cost for $T$ is at most $r$. Note that \texttt{getPivots}$()$ runs in time $O(|P|\cdot|T|)$.

\textbf{\texttt{getReps}}$(T,d,g,P,r)$ takes as input a set $T$ of points with distance function $d$, a group assignment function $g$, a subset $P\subseteq T$, and a radius $r$. For each $p\in P$, initializing $N(p)=\{p\}$, it includes in $N(p)$ one point, from each group, which is within distance $r$ from $p$ whenever such a point exists. Note that this is done while performing a single pass over $T$. This procedure runs in time $O(|P|\cdot|T|)$.

Informally, if $P$ is a good but infeasible set of centers, then \texttt{getReps}$()$ finds representatives $N(p)$ of the groups in the vicinity of each $p\in P$. This, while increasing the clustering cost by at most $r$, gives us enough flexibility to construct a feasible set of centers. The procedure \texttt{HittingSet}$()$ that we describe next finds a feasible set from a collection of sets of representatives.

\textbf{\texttt{HittingSet}}$(\mathcal{N},g,\overline{k})$ takes as input a collection $\mathcal{N}=\{N_1,\ldots,N_K\}$ of pairwise disjoint sets of points, a group assignment function $g$, and a vector $\overline{k}=(k_1,\ldots,k_m)$ of capacities of the $m$ groups. It returns a feasible set $S$ intersecting as many $N_i$'s as possible. This reduces to finding a maximum cardinality matching in an appropriately constructed bipartite graph. It is important to note that this procedure does the post-processing: it doesn't make any pass over the input stream of points. This procedure runs in time $O(K^2\cdot\max_i|N_i|)$.

For interested readers, the pseudocodes of these procedures, an explanation of \texttt{HittingSet}$()$, and the proof of its running time appear in Appendix~\ref{app:alg}.

%%% Local Variables:
%%% mode: latex
%%% TeX-master: "main"
%%% End:

\subsection{A Two-Pass Algorithm}

\begin{algorithm}[tb]
\caption{Two-pass algorithm}
\label{alg_2pass}
\begin{algorithmic}
\State {\bfseries Input}: Metric space $(X,d)$, group assignment function $g$, capacity vector $\overline{k}$.
\State /* Pass 1: Compute pivots. */
\State $P$ $\gets$ \texttt{getPivots}$(X,d,2\tau)$.
\State /* Pass 2: Compute representatives. */
\State $\{N(q):q\in P\}$ $\gets$ \texttt{getReps}$(X,d,g,P,\tau)$.
\State /* Compute solution. */
\State $S$ $\gets$ \texttt{HittingSet}$(\{N(q):q\in P\},g,\overline{k})$.
\State {\bfseries Output} $S$.
\end{algorithmic}
\end{algorithm}

Recall that $\tau$ is an upper bound on the minimum clustering cost. Our two-pass algorithm given by Algorithm~\ref{alg_2pass} consists of three steps. First, the algorithm constructs a maximal subset $P\subseteq X$ of pivots separated pairwise by distance more than $2\tau$ by executing one pass on the stream of points. In another pass, the algorithm computes a representative set $N(q)$ of each pivot $q\in P$. Points in the representative set of a pivot are within distance $\tau$ from the pivot. Due to the separation of $2\tau$ between the pivots, these representative sets are pairwise disjoint. Finally, a feasible set $S$ intersecting as many $N(q)$'s as possible is found and returned. (It will soon be clear that $S$ intersects all the $N(q)$'s.)

The algorithm needs working space only to store the pivots and their representative sets. By substituting $S=S^*$ in Lemma~\ref{lem_pigeonhole}, the number of pivots is at most $k$, that is, $|P|\leq k$. Since $N(q)$ contains at most one point from any group, it has at most $m-1$ points other than $q$. Thus,

\begin{observation}
The two-pass algorithm needs just enough working space to store $km$ points.
\end{observation}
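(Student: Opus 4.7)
The plan is to decompose the storage into the two obvious buckets---the set of pivots $P$ and the union of representative sets $\{N(q):q\in P\}$---and to bound each cleanly. Since the first step of the algorithm only keeps $P$ on hand, and the second step grows each $N(q)$ out of a single pivot while making a single pass, these are indeed the only objects that occupy working memory. Moreover, because $q\in N(q)$ by construction, I can absorb the pivots into the representative sets and only need to bound $\sum_{q\in P}|N(q)|$.

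First I would bound $|P|$ by invoking Lemma~\ref{lem_pigeonhole} with $T=X$, with the witness set of centers being the optimal feasible set $S^*$ (so $|S^*|=k$ and its clustering cost is $\text{OPT}\le\tau$), and with $\rho=\tau$. Since \texttt{getPivots}$(X,d,2\tau)$ returns points pairwise separated by distance more than $2\tau=2\rho$, the lemma forces $|P|\le k$. Next I would bound the size of each $N(q)$ by unpacking the specification of \texttt{getReps}$()$: for a fixed pivot $q$, the set $N(q)$ is initialized to $\{q\}$ and then augmented with at most one point from each of the $m-1$ groups other than $g(q)$. Therefore $|N(q)|\le m$.

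Combining the two bounds gives a total storage of at most $\sum_{q\in P}|N(q)|\le|P|\cdot m\le km$ points, as claimed. There is no real obstacle here; the only thing worth being careful about is not double-counting pivots---the convention $q\in N(q)$ neatly handles that---and making sure the hypothesis $\tau\ge\text{OPT}$ is implicit in the statement, which is exactly the assumption the excerpt fixes just before the algorithms are introduced.
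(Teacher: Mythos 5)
Your proposal is correct and follows essentially the same route as the paper: bound $|P|\le k$ via Lemma~\ref{lem_pigeonhole} with $S^*$ as the witness set of $k$ centers (using $\tau\ge\mathrm{OPT}$ and the $2\tau$ separation from \texttt{getPivots}), and bound each $|N(q)|\le m$ since it holds $q$ plus at most one representative per other group. Nothing is missing.
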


The calls to \texttt{getPivots} and \texttt{getReps} both take time $O(|P|\cdot|X|)=O(kn)$, with $O(|P|)=O(k)$ update time per point. The call to \texttt{HittingSet} takes time $O(|P|^2\cdot\max_{q\in P}|N(q)|)=O(mk^2)$. Thus,

\begin{observation}
The two-pass algorithm runs in time $O(kn+mk^2)$, which is $O(kn)$ when $m$, the number of groups, is constant.
\end{observation}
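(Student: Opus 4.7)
The plan is to combine the per-subroutine running times quoted earlier in the section with the bounds $|P|\le k$ and $\max_{q\in P}|N(q)|\le m$, and then simply add the three contributions. Both bounds are already essentially present in the surrounding text, so the task is mostly organizational accounting rather than finding a new argument.

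First, I would establish $|P|\le k$. This follows from Lemma~\ref{lem_pigeonhole}: since $S^*$ is a feasible set of at most $k$ centers whose clustering cost for $X$ is $\text{OPT}\le\tau$, any subset of $X$ whose points are pairwise separated by more than $2\tau$ has cardinality at most $k$; and the set $P$ returned by \texttt{getPivots}$(X,d,2\tau)$ is exactly such a subset by the description of that subroutine. This is the same bound already invoked in the preceding observation about working space.

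Next, I would bound $|N(q)|$ by $m$. By construction, \texttt{getReps} initializes $N(q)=\{q\}$ and thereafter adds at most one representative per group, so $|N(q)|$ is at most the number of groups $m$.

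Plugging these bounds into the subroutine costs stated earlier, \texttt{getPivots}$(X,d,2\tau)$ runs in time $O(|P|\cdot|X|)=O(kn)$, \texttt{getReps}$(X,d,g,P,\tau)$ runs in time $O(|P|\cdot|X|)=O(kn)$, and \texttt{HittingSet} runs in time $O(|P|^2\cdot\max_{q\in P}|N(q)|)=O(mk^2)$. Summing the three contributions yields $O(kn+mk^2)$, which reduces to $O(kn)$ when $m=O(1)$. The $O(k)$ per-point update time in each pass also drops out of the $O(|P|)$ distance comparisons required. There is no substantive obstacle here; the only care needed is to invoke Lemma~\ref{lem_pigeonhole} with radius $\tau$ (and separation $2\tau$) so that the hypothesis matches $\text{OPT}\le\tau$.
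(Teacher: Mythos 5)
Your proposal is correct and matches the paper's argument exactly: the paper likewise bounds $|P|\le k$ via Lemma~\ref{lem_pigeonhole} applied to $S^*$, notes $|N(q)|\le m$, and sums the quoted subroutine costs $O(|P|\cdot|X|)$ for \texttt{getPivots} and \texttt{getReps} with $O(|P|^2\cdot\max_q|N(q)|)$ for \texttt{HittingSet}. Nothing further is needed.
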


We now prove the approximation guarantee.

\begin{theorem}
The two-pass algorithm returns a feasible set whose clustering cost is at most $3\tau$. This is a $3(1+\varepsilon)$-approximation when $\tau\in[\text{OPT},(1+\varepsilon)\text{OPT}]$.
\end{theorem}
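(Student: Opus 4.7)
The plan is to prove three facts in sequence: the sets $\{N(q) : q \in P\}$ are pairwise disjoint; \texttt{HittingSet} returns a feasible $S$ meeting every $N(q)$; and then $S$ has clustering cost at most $3\tau$ for $X$ by a one-step triangle inequality.

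First I would observe pairwise disjointness: if $r \in N(q_1) \cap N(q_2)$ for distinct pivots $q_1, q_2 \in P$, then $d(q_1, q_2) \le d(q_1, r) + d(r, q_2) \le 2\tau$, contradicting the fact that pivots produced by \texttt{getPivots} with radius $2\tau$ are pairwise separated by distance more than $2\tau$. This is needed both so that \texttt{HittingSet} sees a disjoint collection and so that counting representatives by group later works cleanly.

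The main step, and the only nontrivial one, is to exhibit an explicit feasible transversal of $\{N(q) : q \in P\}$, which forces \texttt{HittingSet} to return a feasible set meeting each $N(q)$. I would take the optimal solution $S^*$ as a witness. Since the clustering cost of $S^*$ for $X$ is $\text{OPT} \le \tau$, each pivot $q \in P$ has some $s^*(q) \in S^*$ with $d(q, s^*(q)) \le \tau$; in particular, by Lemma~\ref{lem_pigeonhole} applied with $\rho = \tau$, $|P| \le k$. The assignment $q \mapsto s^*(q)$ is injective, for otherwise two pivots would lie within $\tau$ of a common point of $S^*$ and thus within $2\tau$ of each other. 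Moreover, since $s^*(q)$ is a point of group $g(s^*(q))$ within distance $\tau$ of $q$, the construction of \texttt{getReps} ensures that $N(q)$ contains some representative $n(q)$ of group $g(s^*(q))$. By injectivity of $s^*$ and feasibility of $S^*$, the number of $q$'s with $g(s^*(q)) = j$ is at most $|S^* \cap g^{-1}(j)| \le k_j$ for every group $j$; combined with the disjointness of the $N(q)$'s this shows $\{n(q) : q \in P\}$ is itself a feasible set of size $|P|$ hitting every $N(q)$. Hence the output $S$ of \texttt{HittingSet}, which maximizes the number of $N(q)$'s hit subject to feasibility, is a feasible set meeting each $N(q)$.

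Finally, the cost bound is a one-line triangle inequality. For any $x \in X$, by maximality of $P$ there exists $q \in P$ with $d(x, q) \le 2\tau$; picking any $s \in S \cap N(q)$ we have $d(q, s) \le \tau$ by definition of \texttt{getReps}, so $d(x, s) \le 3\tau$. Thus the clustering cost of $S$ for $X$ is at most $3\tau$, and specializing to $\tau \in [\text{OPT}, (1+\varepsilon)\text{OPT}]$ yields the claimed $3(1+\varepsilon)$-approximation. Everything outside the transversal argument is a direct consequence of the pigeonhole lemma and the triangle inequality already set up in the preliminaries.
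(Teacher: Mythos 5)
Your proposal is correct and follows essentially the same route as the paper's proof: use $S^*$ to build an injective assignment of pivots to optimal centers, replace each by a same-group representative in $N(q)$ to obtain a feasible transversal hitting every $N(q)$, conclude the same for the output of \texttt{HittingSet}, and finish with the $\tau + 2\tau$ triangle inequality. The only difference is presentational --- you spell out the disjointness of the $N(q)$'s and the per-group counting slightly more explicitly than the paper, which relegates the former to the discussion preceding the theorem.
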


\begin{proof}
Recall that $S^*$ is a feasible set having clustering cost at most $\tau$. For each $q\in P$ let $c_q\in S^*$ denote a point such that $d(q,c_q)\leq\tau$. Since the points in $P$ are separated by distance more than $2\tau$, the points $c_q$ are all distinct. Recall that $N(q)$, the output of \texttt{getReps}$()$, contains one point from every group which has a point within distance $\tau$ from $q$. Therefore, $N(q)$ contains a point, say $b_q$, from the same group as $c_q$ such that $d(q,b_q)\leq\tau$. Consider the set $B=\{b_q:q\in P\}$. This set intersects $N(q)$ for each $q$. Furthermore, $B$ contains exactly as many points from any group as $\{c_q:q\in P\}\subseteq S^*$, and therefore, $B$ is feasible. Thus, there exists a feasible set, namely $B$, intersecting all the pairwise disjoint $N(q)$'s. Recall that $S$, the output of \texttt{HittingSet}$()$, is a feasible set intersecting as many $N(q)$'s as possible. Thus, $S$ also intersects all the $N(q)$'s.

Now, the clustering cost of $S$ for $P$ is at most $\tau$, because $S$ intersects $N(q)$ for each $q\in P$. The clustering cost of $P$ for $X$ is at most $2\tau$ by the maximality of the set returned by \texttt{getPivots}$()$. These facts and Lemma~\ref{lem_cost_triangle} together imply that the clustering cost of $S$, the output of the algorithm, for $X$ is at most $3\tau$.
\end{proof}

\subsection{A Distributed Algorithm}

In the distributed model of computation, the set $X$ of points to be clustered is distributed equally among $\ell$ processors. Each processor is allowed a restricted access to the metric $d$: it may compute the distance between only its own points. Each processor performs some computation on its set of points and sends a summary of small size to a coordinator. From the summaries, the coordinator then computes a feasible set $S$ of points which covers all the $n$ points in $X$ within a small radius. Let $X_i$ denote the set of points distributed to processor $i$. 

\begin{algorithm}[tb]
\caption{Summary computation by the $i$'th processor}
\label{alg_slave}
\begin{algorithmic}
\State {\bfseries Input:} Set $X_i$, metric $d$ restricted to $X_i$, group assignment function $g$ restricted to $X_i$.
\State /* Compute local pivots. */
\State $p^i_1$ $\gets$ an arbitrary point in $X_i$.
\For{$j=2$ {\bfseries to} $k+1$}
    \State $p^i_j\gets\argmax_{p\in X_i}\min_{j':1\leq j'<j}d(p,p^i_j)$.
\EndFor
\State $P_i\gets\{p^i_1,\ldots,p^i_k\}$.
\State $r_i\gets\min_{j':1\leq j'\leq k}d(p^i_{k+1},p^i_j)/2$.
\State /* Compute local representative sets. */
\State $\{L(p):p\in P_i\}$ $\gets$ \texttt{getReps}$(X_i,d,g,P_i,2r_i)$.
\State $L_i\gets\bigcup_{p\in P_i}L(p)$.
\State /* Send message to coordinator. */
\State Send $(P_i,L_i)$ to the coordinator.
\end{algorithmic}
\end{algorithm}

The algorithm executed by each processor $i$ is given by Algorithm~\ref{alg_slave}, which consists of two main steps. In the first step, the processor uses Gonzalez's farthest point heuristic to find $k+1$ points. The first $k$ of those constitute the set $P_i$, which we will call the set of \textit{local pivots}. The point $p_{k+1}$ is the farthest point from the set of local pivots, and it is at a distance $2r_i$ from the set of local pivots. Thus, every point $X_i$ is within distance $2r_i$ from the set of pivots. This means,

\begin{observation}\label{obs_Pi_Xi}
The clustering cost of $P_i$ for $X_i$ is $2r_i$.
\end{observation}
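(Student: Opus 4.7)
The plan is to unpack the two definitions and observe that they meet by construction of the Gonzalez farthest-point step. By Definition~\ref{def_cost}, the clustering cost of $P_i$ for $X_i$ is $\max_{p\in X_i}\min_{q\in P_i}d(p,q)$. From the algorithm, $2r_i=\min_{1\le j'\le k}d(p^i_{k+1},p^i_{j'})$ is precisely the distance from $p^i_{k+1}$ to its nearest local pivot, so it equals $\min_{q\in P_i}d(p^i_{k+1},q)$.

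Next I would invoke the greedy selection rule. The $(k+1)$st Gonzalez pick satisfies $p^i_{k+1}\in\argmax_{p\in X_i}\min_{1\le j'\le k}d(p,p^i_{j'})$, i.e.\ among all points of $X_i$, $p^i_{k+1}$ has the largest distance to the set $P_i$. Hence for every $p\in X_i$,
\[
\min_{q\in P_i}d(p,q)\;\le\;\min_{q\in P_i}d(p^i_{k+1},q)\;=\;2r_i,
\]
so the max over $p\in X_i$ is at most $2r_i$. Plugging in $p=p^i_{k+1}\in X_i$ shows the bound is attained, giving equality. (The statement in the excerpt only asserts the upper bound, which is the direction that is needed downstream, and the above argument yields exactly that.)

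There is no real obstacle: this is essentially the defining property of farthest-point traversal. The only minor care point is degeneracy when $|X_i|\le k$, in which case the loop may run out of distinct points and $r_i=0$; then $P_i\supseteq X_i$ and the clustering cost is trivially $0=2r_i$, so the observation still holds. Apart from this edge case, the proof is just a two-line unrolling of the definitions of $r_i$, $P_i$, and clustering cost.
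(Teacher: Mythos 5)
Your proof is correct and matches the paper's own (one-line) justification: $p^i_{k+1}$ is by construction the point of $X_i$ farthest from $P_i$, its distance to $P_i$ is exactly $2r_i$, so every point of $X_i$ is within $2r_i$ of $P_i$. The extra remarks about equality and the degenerate case $|X_i|\le k$ are fine but not needed for how the observation is used downstream.
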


In the second step, for each local pivot $p\in P_i$, the processor computes a set $L(p)$ of local representatives in the vicinity of $p$. Finally, the set $P_i$ of local pivots and the union $L_i=\bigcup_{p\in P_i}L(p)$ of local representative sets is sent to the coordinator. Since $L(p)$ contains at most one point from any group, it has at most $m-1$ points other than $p$. Since $|P_i|=k$ we have the following observation.

\begin{observation}\label{obs_msg}
Each processor sends at most $km$ points to the coordinator.
\end{observation}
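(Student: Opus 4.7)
The statement to prove is that the message $(P_i, L_i)$ contains at most $km$ points, so my plan is to bound the two components and show their union has size at most $km$.

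First I would observe that $|P_i| = k$ directly from the construction: the for loop runs from $j=2$ to $k+1$, so the local pivots are $p^i_1, \ldots, p^i_{k+1}$, of which the first $k$ constitute $P_i$. Next I would bound $|L(p)|$ for each $p \in P_i$ by unpacking the specification of \texttt{getReps}$(X_i, d, g, P_i, 2r_i)$ from the preliminaries: the routine initializes $L(p) = \{p\}$ and then, for each group $j \in \{1, \ldots, m\}$, adds at most one point of group $j$ lying within distance $2r_i$ of $p$. Since $p$ itself belongs to some group $g(p)$ and the routine adds at most one representative per group, $L(p)$ contains $p$ plus at most $m-1$ further points, so $|L(p)| \leq m$.

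Combining these two bounds yields
\[
|L_i| \;=\; \Bigl|\bigcup_{p \in P_i} L(p)\Bigr| \;\leq\; \sum_{p \in P_i} |L(p)| \;\leq\; k \cdot m.
\]
Finally, since each $p \in P_i$ lies in $L(p) \subseteq L_i$, we have $P_i \subseteq L_i$, so the total number of distinct points in the message $(P_i, L_i)$ is $|P_i \cup L_i| = |L_i| \leq km$.

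There is no real obstacle here: the observation is essentially a bookkeeping consequence of the definition of \texttt{getReps}$()$ together with the fact that the pivot-selection loop runs exactly $k+1$ times. The only subtlety worth stating explicitly is that $p$ is counted within its own group in $L(p)$, which is why the bound is $m$ rather than $m+1$, and hence $km$ rather than $k(m+1)$.
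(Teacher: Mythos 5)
Your proposal is correct and matches the paper's own (brief) justification: $|P_i|=k$, each $L(p)$ contains at most one point per group (so at most $m-1$ points besides $p$), hence $|L_i|\leq km$, and $P_i\subseteq L_i$ makes the whole message at most $km$ points. Your explicit remark that $p$ occupies its own group's slot in $L(p)$ is exactly the bookkeeping detail the paper relies on.
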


Moreover, the separation between the local pivots is bounded as follows.

\begin{lemma}\label{lem_tau}
For every processor $i$, we have $r_i\leq\text{OPT}\leq\tau$.
\end{lemma}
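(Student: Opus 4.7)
The second inequality $\opt\le\tau$ is just the standing assumption on $\tau$ from Section~\ref{sec:prelim}, so the substance is to prove $r_i\le\opt$. The plan is to extract, from the fact that Gonzalez's farthest-point heuristic was run for $k+1$ iterations, a set of $k+1$ points in $X$ pairwise separated by distance at least $2r_i$, and then apply Lemma~\ref{lem_pigeonhole} with the optimal solution $S^*$.

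First I would establish the following monotonicity inside the greedy run: for every $j$ with $1\le j\le k$, we have $\min_{j'<j}d(p^i_j,p^i_{j'})\ge 2r_i$. This uses two facts: (a) since the min distance to a set only decreases when the set grows, $\min_{j'<j}d(p^i_{k+1},p^i_{j'})\ge\min_{j'\le k}d(p^i_{k+1},p^i_{j'})=2r_i$; and (b) by the farthest-point rule, $p^i_j$ maximizes the distance to $\{p^i_1,\dots,p^i_{j-1}\}$ over all points of $X_i$, and in particular beats $p^i_{k+1}$, which is also in $X_i$. Chaining (a) and (b) gives $\min_{j'<j}d(p^i_j,p^i_{j'})\ge 2r_i$. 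Together with the definition of $r_i$ for the $(k+1)$-st point itself, this shows that the $k+1$ points $p^i_1,\dots,p^i_{k+1}$ are pairwise at distance at least $2r_i$.

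Now I would argue by contradiction. Suppose $r_i>\opt$. Then these $k+1$ points form a subset of $X$ whose pairwise distances are strictly greater than $2\opt$. By definition, $S^*$ is a feasible set of $k$ centers whose clustering cost for $X$ is $\opt$, so Lemma~\ref{lem_pigeonhole} (applied with $T=X$, $S=S^*$, $\rho=\opt$, and $P=\{p^i_1,\dots,p^i_{k+1}\}$) forces $|P|\le k$, contradicting $|P|=k+1$. Hence $r_i\le\opt$, completing the proof.

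I do not anticipate a serious obstacle: the only point that needs care is making the chain of inequalities in the first step honest about strict-versus-weak separation, so that Lemma~\ref{lem_pigeonhole}'s strict hypothesis on pairwise distances is what rules out $r_i>\opt$ (rather than the weaker $r_i\ge\opt$).
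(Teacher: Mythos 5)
Your proposal is correct and follows essentially the same route as the paper: both derive a contradiction with Lemma~\ref{lem_pigeonhole} by exhibiting $k+1$ points of $X_i$ pairwise separated by more than twice the optimal radius. The only difference is that you spell out the standard monotonicity property of Gonzalez's farthest-point heuristic (which the paper takes for granted) and you contradict $r_i>\opt$ directly rather than $r_i>\tau$, which is in fact the cleaner way to get the stated inequality $r_i\le\opt$.
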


\begin{proof}
Suppose $r_i>\tau$. Then $\{p^i_1,\ldots,p^i_{k+1}\}\subseteq X_i$ is a set of $k+1$ points separated pairwise by distance more than $2\tau$. But $S^*$ is a set of at most $k$ points whose clustering cost for $X_i$ is $\text{OPT}\leq\tau$. This contradicts Lemma~\ref{lem_pigeonhole}.
\end{proof}

Observation~\ref{obs_Pi_Xi} allows us to define a covering function $\text{cov}$ from $X$, the input set of points, to $\bigcup_{i=1}^{\ell}P_i$, the set of local pivots, as follows.

\begin{definition}\label{def_cov}
Let $p$ be an arbitrary point in $X$. Suppose $p$ is processed by processor $i$, that is, $p\in X_i$. Then $\text{cov}(p)$ is an arbitrary local pivot in $P_i$ within distance $2r_i$ from $p$.
\end{definition}

Since the processors send only a small number of points to the coordinator, it is very well possible that the optimal set $S^*$ of centers is lost in this process. In the next lemma, we claim that the set of points received by the coordinator contains a good and feasible set of centers nevertheless.

\begin{lemma}\label{lem_5tau}
The set $L=\bigcup_{i=1}^{\ell}L_i$ contains a feasible set, say $B$, whose clustering cost for $\bigcup_{i=1}^{\ell}P_i$ is at most $5\tau$.
\end{lemma}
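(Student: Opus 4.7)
The plan is to exhibit a concrete feasible set $B\subseteq L$ by treating $S^*$ as a template and replacing each optimal center with a nearby local representative that made it into some processor's message to the coordinator.

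For each $c\in S^*$, let $j(c)$ be the processor that received $c$, so $c\in X_{j(c)}$. By Definition~\ref{def_cov} (which uses Observation~\ref{obs_Pi_Xi}), the point $q_c:=\text{cov}(c)\in P_{j(c)}$ satisfies $d(c,q_c)\leq 2r_{j(c)}$, and by Lemma~\ref{lem_tau}, $r_{j(c)}\leq\tau$. Processor $j(c)$ ran \texttt{getReps} with radius $2r_{j(c)}$, so for every group that has some witness within distance $2r_{j(c)}$ of $q_c$, the set $L(q_c)$ contains one representative of that group. Since $c\in X_{j(c)}$ itself is within distance $2r_{j(c)}$ of $q_c$ and belongs to group $g(c)$, $L(q_c)$ must contain a point $b_c$ with $g(b_c)=g(c)$ and $d(q_c,b_c)\leq 2r_{j(c)}\leq 2\tau$. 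The triangle inequality then gives $d(c,b_c)\leq 4\tau$. Define $B=\{b_c:c\in S^*\}$; clearly $B\subseteq L$.

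To verify feasibility, observe that for every group $\ell$,
\[
|B\cap g^{-1}(\ell)|\;\leq\;|\{b_c:c\in S^*,\,g(c)=\ell\}|\;\leq\;|S^*\cap g^{-1}(\ell)|\;\leq\;k_\ell,
\]
using that $g(b_c)=g(c)$ and that $S^*$ is feasible. To bound the clustering cost of $B$ for the union $\bigcup_i P_i$ of local pivots, take any pivot $p$; since $S^*$ achieves clustering cost at most $\text{OPT}\leq\tau$ on all of $X$, there is some $c(p)\in S^*$ with $d(p,c(p))\leq\tau$. Then $b_{c(p)}\in B$ and the triangle inequality gives $d(p,b_{c(p)})\leq d(p,c(p))+d(c(p),b_{c(p)})\leq\tau+4\tau=5\tau$, proving the claim.

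The only genuinely subtle step is confirming the existence of $b_c$, which hinges on realizing that $c$ itself is the witness that forces \texttt{getReps} to include a representative of group $g(c)$ in $L(q_c)$; the rest is bookkeeping with the triangle inequality. A minor point worth noting is that the map $c\mapsto b_c$ need not be injective, so $B$ could be strictly smaller than $S^*$, but this can only make feasibility easier and does not affect the covering argument above.
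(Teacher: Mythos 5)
Your proof is correct and follows essentially the same route as the paper's: define $B$ by replacing each $c\in S^*$ with a same-group representative in $L(\mathrm{cov}(c))$, bound $d(c,b_c)\leq 4\tau$ via Lemma~\ref{lem_tau} and the triangle inequality, and chain this with the fact that $S^*$ covers $\bigcup_i P_i\subseteq X$ within $\tau$. Your remark that $c\mapsto b_c$ need not be injective is a small point of extra care (the paper asserts $B$ has \emph{exactly} as many points per group as $S^*$, whereas the correct and sufficient statement is \emph{at most} as many), but the argument is otherwise identical.
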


\begin{proof}
Consider any $c\in S^*$, and suppose it is processed by processor $i$. Then $d(c,\text{cov}(c))\leq2r_i$ by Definition~\ref{def_cov}. Recall that $L(\text{cov}(c))$, the output of \texttt{getReps}$()$, contains one point from every group which has a point within distance $2r_i$ from $\text{cov}(c)$. Therefore, $L(\text{cov(c)})\subseteq L_i$ contains some point, say $c'$, from the same group as $c$ (possibly $c$ itself), such that $d(c',\text{cov}(c))\leq2r_i$. Then $d(c,c')\leq4r_i\leq4\tau$ by the triangle inequality and Lemma~\ref{lem_tau}. Let $B=\{c':c\in S^*\}$. Clearly, $B\subseteq\bigcup_{i=1}^{\ell}L_i$. Since $B$ has exactly as many points from any group as $S^*$, $B$ is feasible. The clustering cost of $B$ for $S^*$ is at most $4\tau$. The clustering cost of $S^*$ for $\bigcup_{i=1}^{\ell}P_i$ is at most $\tau$, because $\bigcup_{i=1}^{\ell}P_i\subseteq X$. By Lemma~\ref{lem_cost_triangle}, the clustering cost of $B$ for $\bigcup_{i=1}^{\ell}P_i$ is at most $5\tau$, as required.
\end{proof}

\begin{algorithm}[tb]
\caption{Coordinator's algorithm}
\label{alg_master}
\begin{algorithmic}
\State $X'\gets\emptyset$, $L\gets\emptyset$.
\State /* Receive messages from processors. */
\For{$i=1$ {\bfseries to} $\ell$}
    \State Receive $(P_i,L_i)$ from processor $i$.
    \State $X'\gets X'\cup P_i$, $L\gets L\cup L_i$.
\EndFor
\State /* Coordinator now has access to $d$ and $g$ restricted to $X'\cup L$, and capacity vector $\overline{k}=(k_1,\ldots,k_m)$. */
\State /* Compute global pivots. */
\State $P$ $\gets$ \texttt{getPivots}$(X',d,10\tau)$.
\State /* Compute global representative sets. */
\State $\{N(q):q\in P\}$ $\gets$ \texttt{getReps}$(L,d,g,P,5\tau)$.
\State /* Compute solution. */
\State $S$ $\gets$ \texttt{HittingSet}$(\{N(q):q\in P\},g,\overline{k})$.
\State {\bfseries Output} $S$.
\end{algorithmic}
\end{algorithm}

The algorithm executed by the coordinator is given by Algorithm~\ref{alg_master}. The coordinator constructs a maximal subset $P$ of the set of pivots $X'=\bigcup_{i=1}^{\ell}P_i$ returned by the processors such that points in $P$ are pairwise separated by distance more than $10\tau$. $P$ is called the set of global pivots. For each global pivot $q\in P$, the coordinator computes a set $N(q)\subseteq L=\bigcup_{i=1}^{\ell}L_i$ of its global representatives, all of which are within distance $5\tau$ from $q$. Due to the separation between points in $P$, the sets $N(q)$ are pairwise disjoint. Finally, a feasible set $S$ intersecting as many $N(q)$'s as possible is found and returned. (As before, it will be clear that $S$ intersects all the $N(q)$'s.)

\begin{theorem}
The coordinator returns a feasible set whose clustering cost is at most $17\tau$. This is a $17(1+\varepsilon)$-approximation when $\tau\in[\text{OPT},(1+\varepsilon)\text{OPT}]$.
\end{theorem}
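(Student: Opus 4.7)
The plan is to mirror the three-step analysis used for the two-pass algorithm, but with enlarged radii to account for the additional distortion introduced by distributing the data. Combining three applications of the triangle inequality via Lemma~\ref{lem_cost_triangle}, I would bound the clustering cost of $S$ for $X$ by the sum of the clustering costs of $S$ for $P$, of $P$ for $X' = \bigcup_i P_i$, and of $X'$ for $X$. The last two are immediate: Observation~\ref{obs_Pi_Xi} together with Lemma~\ref{lem_tau} gives $2\tau$ for the third, while the maximality of the output of \texttt{getPivots}$(X', d, 10\tau)$ gives $10\tau$ for the second. The remaining goal is to show the first is at most $5\tau$, which reduces to showing that $S$ intersects every $N(q)$.

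The key step is to exhibit a feasible subset of $L$ that hits every $N(q)$; once such a witness exists, \texttt{HittingSet} will return an $S$ intersecting all $N(q)$'s too, and since every point in $N(q)$ lies within $5\tau$ of $q$, this gives the $5\tau$ bound. The witness is built in two replacements. By Lemma~\ref{lem_5tau}, there is a feasible set $B \subseteq L$ with clustering cost at most $5\tau$ for $X' \supseteq P$. For each $q \in P$, pick some $b_q \in B$ with $d(q, b_q) \leq 5\tau$. These $b_q$ are pairwise distinct: if $b_q = b_{q'}$ for $q \neq q'$, the triangle inequality gives $d(q, q') \leq 10\tau$, contradicting the strict separation guaranteed by \texttt{getPivots}$(X', d, 10\tau)$. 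Hence $\{b_q : q \in P\}$ has a group multiset dominated by that of $B$, and is therefore feasible.

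Next, I would replace $b_q$ by some point $b'_q \in N(q)$ of the same group as $b_q$. Such a $b'_q$ exists because \texttt{getReps} populates $N(q)$ with one point from $L$ in each group that has a point within $5\tau$ of $q$, and $b_q \in L$ witnesses the existence of such a point in its own group. Pairwise disjointness of the $N(q)$'s (again from the $10\tau$ separation in $P$) makes the $b'_q$'s distinct, and the group multiset is preserved, so $\{b'_q : q \in P\}$ is feasible and hits every $N(q)$. This forces the output $S$ of \texttt{HittingSet} to hit every $N(q)$ as well, completing the chain.

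The main obstacle is arranging the radii so that distinctness and feasibility of both $\{b_q\}$ and $\{b'_q\}$ survive the chain of triangle-inequality slacks; this is exactly why \texttt{getPivots} uses the seemingly loose radius $10\tau$ rather than something smaller, and why \texttt{getReps} uses the matching radius $5\tau$. Once the radii line up, the rest is a direct adaptation of the two-pass proof, and summing $2\tau + 10\tau + 5\tau$ yields the claimed $17\tau$ bound, which scales to $17(1+\varepsilon)$ when $\tau \in [\text{OPT}, (1+\varepsilon)\text{OPT}]$.
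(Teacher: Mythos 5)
Your proposal is correct and follows essentially the same route as the paper's proof: the same decomposition of the cost into $5\tau + 10\tau + 2\tau$ via Lemma~\ref{lem_cost_triangle}, and the same witness construction that starts from the feasible set $B$ of Lemma~\ref{lem_5tau}, picks distinct $b_q$'s using the $10\tau$ separation of $P$, and swaps each for a same-group representative $b'_q\in N(q)$ to force \texttt{HittingSet} to hit every $N(q)$. The only cosmetic difference is that you explicitly note feasibility of the intermediate set $\{b_q\}$, which the paper skips by arguing directly about $\{b'_q\}$.
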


\begin{proof}
By Lemma~\ref{lem_5tau}, $L$ contains a feasible set, say $B$, whose clustering cost for $X'$ is at most $5\tau$. For each $q\in P\subseteq X'$, let $b_q$ denote a point in $B$ that is within distance $5\tau$ from $q$. Since the points in $X'$ are separated pairwise by distance more than $10\tau$, $b_q$'s are all distinct. 
By the property of \texttt{getReps}$()$, the set $N(q)$ returned by it contains a point, say $b'_q$, from the same group as $b_q$.
%and also within distance $5\tau$ from $q$.
Let $B'=\{b'_q:q\in P\}$. This set $B'$ intersects $N(q)$ for each $q\in P$. Since $b'_q$ and $b_q$ are from the same group and $b_q$'s are all distinct, $B'$ contains at most as many points from any group as $B$ does. Since $B$ is feasible, so is $B'$. To summarize, there exists a feasible set, namely $B'$, intersecting all the $N(q)$'s.
%We may assume $b_q\in N(q)$ for $q\in P$, because otherwise $N(q)$ contains a point from the same group as $b_q$, and we can replace $b_q$ by that point in $B$ while still maintaining $B$'s clustering cost guarantee. Thus,  Moreover, $B'$ being a subset of $B$ is feasible. 
Recall that $S$, the output of \texttt{HittingSet}$()$, is a feasible set intersecting as many $N(q)$'s as possible. Thus, $S$ also intersects all the $N(q)$'s.

Now, the clustering cost of $S$ for $P$ is at most $5\tau$, because $S$ intersects $N(q)$ for each $q\in P$. The clustering cost of $P$ for $X'$ is at most $10\tau$ by the maximality of the set returned by \texttt{getPivots}$()$. The clustering cost of $X'=\bigcup_{i=1}^{\ell}P_i$ for $X=\bigcup_iX_i$ is at most $2\tau$ because the clustering cost of each $P_i$ for $X_i$ is at most $2r_i\leq2\tau$. These facts and Lemma~\ref{lem_cost_triangle} together imply that the clustering cost of $S$, the output of the coordinator, for $X$ is at most $17\tau$.
\end{proof}

\begin{comment}
\begin{proof}
By Lemma~\ref{lem_5tau}, $L$ contains a feasible set, say $B$, whose clustering cost for $X'$ is at most $5\tau$. For each $q\in P\subseteq X'$, let $b_q$ denote a point in $B$ that is within distance $5\tau$ from $q$. Let $B'=\{b_q:q\in P\}$. We may assume $b_q\in N(q)$ for $q\in P$, because otherwise $N(q)$ contains a point from the same group as $b_q$, and we can replace $b_q$ by that point in $B$ while still maintaining $B$'s clustering cost guarantee. Thus, $B'$ intersects $N(q)$ for each $q\in P$. Moreover, $B'$ being a subset of $B$ is feasible. Recall that $S$, the output of \texttt{HittingSet}$()$, is a feasible set intersecting as many $N(q)$'s as possible. Thus, $S$ also intersects all the $N(q)$'s.

Now, the clustering cost of $S$ for $P$ is at most $5\tau$, because $S$ intersects $N(q)$ for each $q\in P$. The clustering cost of $P$ for $X'$ is at most $10\tau$ by the maximality of the set returned by \texttt{getPivots}$()$. The clustering cost of $X'=\bigcup_{i=1}^{\ell}P_i$ for $X=\bigcup_iX_i$ is at most $2\tau$ because the clustering cost of each $P_i$ for $X_i$ is at most $2r_i\leq2\tau$. These facts and Lemma~\ref{lem_cost_triangle} together imply that the clustering cost of $S$, the output of the coordinator, for $X$ is at most $17\tau$.
\end{proof}
\end{comment}

We note here that even though our distributed algorithm has the same approximation guarantee as Kale's one-pass algorithm,
 it is inherently a different algorithm.  Ours is extremely parallel whereas Kale's is extremely sequential.  We now prove a bound on the running time.

\begin{theorem}
The running time of the distributed algorithm is $O(kn/\ell+mk^2\ell)$.  By an appropriate choice of $\ell$, the number of processors, this can be made $O(m^{1/2}k^{3/2}n^{1/2})$.
\end{theorem}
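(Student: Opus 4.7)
The plan is to bound the per-processor runtime and the coordinator's runtime separately, then add them (since the processors run in parallel but the whole algorithm waits for the slowest processor and then for the coordinator), and finally optimize over $\ell$.

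First I would analyze Algorithm~\ref{alg_slave} on processor $i$. Since the points are distributed equally, $|X_i|=n/\ell$. The farthest-point loop runs $k$ iterations, and each iteration scans $X_i$ while maintaining, for each point, its distance to the nearest pivot selected so far; this runs in $O(k\cdot n/\ell)$ time. The call \texttt{getReps}$(X_i,d,g,P_i,2r_i)$ runs in $O(|P_i|\cdot|X_i|)=O(k\cdot n/\ell)$, by the running-time bound stated when the procedure is introduced. Thus each processor finishes in $O(kn/\ell)$ time, and since they run in parallel this is the contribution of the first phase.

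Next I would analyze Algorithm~\ref{alg_master}. The coordinator receives at most $k$ local pivots and at most $km$ representatives per processor (by Observation~\ref{obs_msg}), so $|X'|\le k\ell$ and $|L|\le km\ell$. The set $P$ returned by \texttt{getPivots}$(X',d,10\tau)$ has size at most $k$: this follows from Lemma~\ref{lem_pigeonhole} applied with the feasible set $S^*$ (of $k$ points and cost $\opt\le\tau\le 5\tau$) together with the fact that points in $P$ are pairwise separated by more than $10\tau$. Hence \texttt{getPivots} costs $O(|P|\cdot|X'|)=O(k^2\ell)$, \texttt{getReps}$(L,d,g,P,5\tau)$ costs $O(|P|\cdot|L|)=O(mk^2\ell)$, and \texttt{HittingSet} costs $O(|P|^2\cdot\max_q|N(q)|)=O(mk^2)$. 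Summing, the coordinator runs in $O(mk^2\ell)$ time, and the overall running time is $O(kn/\ell+mk^2\ell)$, as claimed.

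Finally I would balance the two terms by setting $kn/\ell=mk^2\ell$, which yields $\ell=\sqrt{n/(mk)}$. Substituting gives each term equal to $\sqrt{mk^3n}=m^{1/2}k^{3/2}n^{1/2}$, proving the second bound. The only subtlety is that $\ell$ must be a positive integer no larger than $n$; one picks $\ell=\max\{1,\lfloor\sqrt{n/(mk)}\rfloor\}$, and in the degenerate regime where $n\le mk$ (so $\ell=1$) the bound $O(kn+mk^2)$ is already $O(m^{1/2}k^{3/2}n^{1/2})$ up to constants. No real obstacle arises; the computation is routine once the per-processor and coordinator work are counted carefully, with the only mild care being the bound $|P|\le k$ for the global pivots, which is what keeps the coordinator's \texttt{getReps} cost linear in $|L|$.
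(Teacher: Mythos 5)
Your proposal is correct and follows essentially the same route as the paper: bound each processor's work by $O(kn/\ell)$, bound the coordinator's work by $O(mk^2\ell)$ using $|P|\le k$ (via Lemma~\ref{lem_pigeonhole}) and $|L|\le mk\ell$ (via Observation~\ref{obs_msg}), then balance with $\ell=\Theta(\sqrt{n/(mk)})$. Your added remarks on the integrality of $\ell$ and the explicit justification of $|P|\le k$ via $S^*$ are fine refinements but do not change the argument.
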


%Here is an informal proof. 
\begin{proof}
For each processor $i$, computing local pivots as well as the call to \texttt{getReps}$()$ takes $O(|P_i|\cdot|X_i|)=O(kn/\ell)$ time each. For the coordinator, the separation between the global pivots and Lemma~\ref{lem_pigeonhole} together enforce $|P|\leq k$. Observation~\ref{obs_msg} implies $|L|\leq m\cdot\max_i|L_i|\leq mk\ell$. Therefore, \texttt{getPivots}$()$ takes time $O(|P|\cdot|X'|)=O(k^2\ell)$ and \texttt{getReps}$()$ takes time $O(|P|\cdot|L|)=O(mk^2\ell)$. The call to \texttt{HittingSet}$()$ takes time $O(k^2\max_q|N(q)|)=O(mk^2)$, thus limiting the coordinator's running time to $O(mk^2\ell)$. Choosing $\ell=\Theta(\sqrt{n/(mk)})$ minimizes the total running time to $O(m^{1/2}k^{3/2}n^{1/2})$.
\end{proof}

\subsection{Handling the Guesses}
\label{subsec:guesses}
Given an arbitrarily small parameter $\varepsilon$, a lower bound $L
\le \opt$, and an upper bound $U \ge \opt$, we run our algorithms for guess
$\tau \in \{L, L(1+\eps), L(1+\eps)^2, \ldots, U\}$, which means at most
$\log_{1+\eps}(U/L)$ guesses.  Call this method of guesses as geometric guessing starting at $L$ until $U$.  For the $\tau \in [\opt, \opt (1 + \eps)]$, our
algorithms will compute a solution successfully.

In the distributed algorithm, by Lemma~\ref{lem_tau}, for each
processor, $r_i \le \opt$.  Therefore,
$\max_i r_i \le \opt$.  We then run Algorithm~\ref{alg_master} with geometric guessing starting at $\max_i r_i$ until it
successfully finds a solution.

For the two-pass algorithm, let $P$ be the set of first
$k + 1$ points; then $L = \min_{x_1, x_2 \in P} d(x_1, x_2)/2$ is a
lower bound (call this the simple lower bound).  Note that no passes need to be spent to compute the simple lower bound.  We also need an upper bound
$U \ge \opt$.  One can compute an arbitrary solution and its cost---which will
be an upper bound---by spending two more passes (call this the simple upper bound).  This results in a four-pass
algorithm.  To obtain a truly two pass algorithm and space usage
$O(km \lobebe)$, one can use Guha's trick~\cite{guha09}, which is essentially
starting $O(\lobebe)$ guesses and if a run with guess $\tau$ fails, then continuing the run with guess $\tau/\eps$ and treating the old summary as the initial
stream for this guess; see also \cite{kale19} for details.  But obtaining and using an upper bound is convenient and easy to implement in practice.

%%% Local Variables:
%%% mode: latex
%%% TeX-master: "main"
%%% End:

\section{Distributed $k$-Center Lower Bound}
\label{sec:lb}
Malkomes et al.~\cite{malkomes1_etal15} generalized the greedy algorithm~\cite{gonzalez85} to obtain a $4$-approximation algorithm for the $k$-center problem in the distributed setting.
%Moreover, their algorithm is efficient in that each processor communicates $k$ of their inputs points to the coordinator.
Here we prove a lower bound for the $3$-center problem with $9$ processors for a special class of distributed algorithms: 
If each processor communicates less than a constant fraction of their input points, then with a constant probability, the output of the coordinator will be no better than a $4$-approximation to the optimum.
\begin{figure}[h]
    \centering{\includegraphics[scale=0.6]{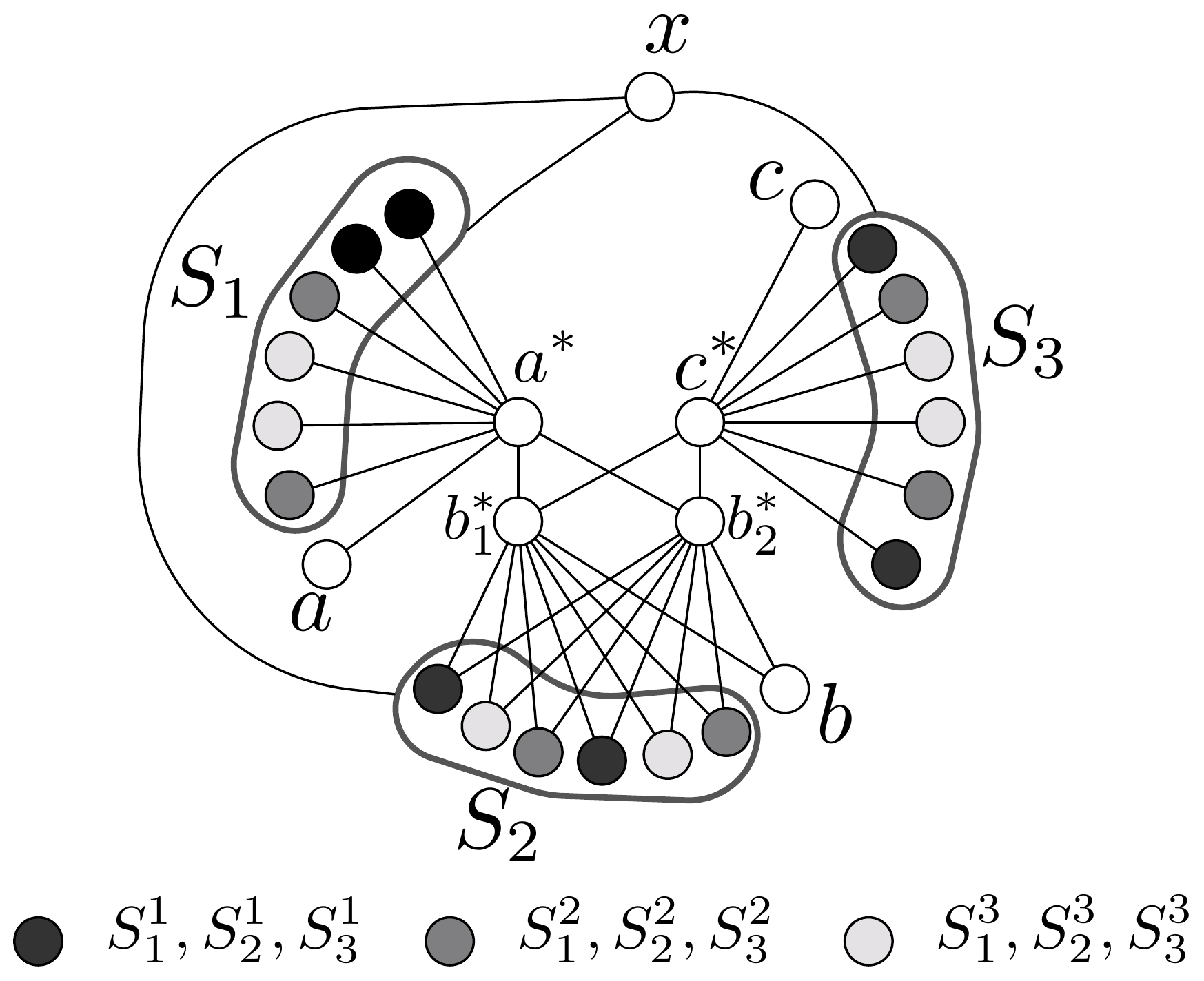}}
    \caption{\footnotesize The underlying metric for $n'=2$}
    \label{metric}
\end{figure}
Figure~\ref{metric} shows a graph metric with $9n'+7$ points for which lower bound holds, where the point $x$ is not a part of the metric but is only used to define the distances.
%It is a graph metric where the distance is given by the shortest path length.
Note that $|S_1|=|S_2|=|S_3| = 3n'$ and $x$ is at distance of $1$ from each point in $S_1 \cup S_2 \cup S_3$. 
%The number of points in the metric is $9n'+8$

For $i\in \{1,2,3\}$, let $S_i^1,S_i^2,S_i^3$ denote an arbitrary equipartition of 
$S_i$. 
There are $9$ processors, whose inputs are given by $Y_1^j = \{b_1^*,b_2^*,a\} \cup S_1^j$, $Y_2^j = \{a^*,c^*,b\} \cup S_2^j$ and $Y_3^j = \{b_1^*,b_2^*,c\} \cup S_3^j$, for $j\in \{1,2,3\}$.
%$Y_1^j = \{a^*,c^*,b\} \cup S_2^1, Y_2 = \{a^*,c^*,b\}\cup S_2^2, Y_2 = \{a^*,c^*,b\} \cup S_2^3, Y_4 = \{b_1^*,b_2^*,c\} \cup S_3^1, Y_5 = \{b_1^*,b_2^*,c\} \cup S_3^2, Y_6 = \{b_1^*,b_2^*,c\} \cup S_3^3, Y_7 = \{b_1^*,b_2^*,a\} \cup S_1^1, Y_8 = \{b_1^*,b_2^*,a\} \cup S_1^2, Y_9 = \{b_1^*,b_2^*,a\} \cup S_1^3$.
The goal is to solve the $3$-center problem on the union of their inputs.
(Observe that the optimum solution is $\{a^*,c^*,b^*_1\}$ with distance $1$.)
Each processor is allowed to send a subset of their input points to the coordinator, who outputs three of the received points.
For this class of algorithms, we show that if each processor communicates less than $(n' + 3)/54$ points, then the output of the coordinator is no better than a $4$-approximation to the optimum with probability at least $1/84$.
Using standard amplification arguments, we can generate a metric instance for the ($3\alpha$)-center problem on which with probability at least $1-\varepsilon$, the algorithm outputs no better than $4$-approximation ($\alpha \approx \log(1/\varepsilon)$).

We first discuss the intuition behind the proof.
The key observation is that all points in each $Y_i^j$ are pairwise equidistant.
Therefore, sending a uniformly random subset of the inputs is the best strategy for each processor.
Since each processor communicates only a small fraction of its input points, the probability that the coordinator receives any of the points in $\{a^*,b_1^*,b_2^*,c^*,a,b,c\}$ is negligible.
Conditioned on the coordinator not receiving these points, all the received points are a subset of $S_1\cup S_2 \cup S_3$.
As all points in $S_1\cup S_2 \cup S_3$ are pairwise equidistant, the best strategy for the coordinator is to output $3$ points at random.
Hence, with constant probability, all the points in the output belong to $S_1$ or all of them belong to $S_3$.
This being the case, the output has cost $4$, whereas the optimum cost is $1$.
%As the optimum cost is $1$, we obtain the desired $4$-approximation hardness.

\subsection{The Formal Proof}
We now present the formal details of the lower bound. 
For a natural number $n$, $[n]$ denotes the set $\{1,2,\ldots,n\}$.
%\paragraph{The Communication Model $\mathrm{PSim}$.} TODO: define the model using notation from previous sections.
\paragraph{The metric space $\mathcal{M}(n')$.} 
The point set of this metric space on $n = 9n'+7$ points is given by 
\[S := \{a^*,b_1^*,b_2^*,c^*,a,b,c\} \cup S_1 \cup S_2 \cup S_3,\]
where $|S_1| = |S_2| = |S_3| = 3n'$.
Let $C := \{a^*,b_1^*,b_2^*,c^*,a,b,c\}$.  We call the points in $C$ \emph{critical}.  Note that $S_1,S_2,S_3$ are pairwise disjoint and are also disjoint from $C$.  The metric $d:S\times S\longrightarrow\mathbb{R}$ is the shortest-path-length metric induced by the graph shown in Figure~\ref{metric} (where $x$ is not a point in $S$ but is only used to define the pairwise distances). The pairwise distances are given in Table~\ref{metric-pairwisedist}.
Note that if the table entry $i,j$ is indexed by sets, then the entry corresponds to the distance between distinct points in the sets. The following observation can be verified by a case-by-case analysis.

\begin{observation}\label{obs_lb_opt}
The sets $\{a^*,b_1^*,c^*\}$ and $\{a^*,b_2^*,c^*\}$ are the only optimum solutions of the $3$-center problem on $\mathcal{M}(n')$ and they have unit clustering cost. The clustering cost of any subset of $S_1$ is $4$ due to point $c$. Similarly, the clustering cost of any subset of $S_3$ is $4$ due to point $a$.
\end{observation}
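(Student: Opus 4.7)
The plan is to verify the three claims by direct inspection of the pairwise distances in Table~\ref{metric-pairwisedist}, using the structural symmetries of $\mathcal{M}(n')$ as a shortcut wherever possible.

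First, for the existence and cost of the unit-cost solutions, I would check that each of $\{a^*,b_1^*,c^*\}$ and $\{a^*,b_2^*,c^*\}$ covers every point of $S$ at distance at most $1$, via the natural assignment suggested by Figure~\ref{metric}: $a^*$ covers $\{a\}\cup S_1$, $b_i^*$ covers $\{b\}\cup S_2\cup \{b_{3-i}^*\}$, and $c^*$ covers $\{c\}\cup S_3$. Each of these pairs (critical-to-critical, and critical-to-$S_i$) should appear as a distance-$1$ entry in the table, so the verification is a short enumeration. This gives unit clustering cost for both sets. Since the single point $x$ is not in $S$ but every point of $S_1\cup S_2\cup S_3$ lies at distance $1$ from $x$, any two distinct critical points and any critical-to-$S_i$ pair used above are at distance exactly $1$, while distances like $d(a,c)$ are strictly larger, forcing the unit-cost solutions to have the stated form.

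Second, for uniqueness, I would argue contrapositively by examining which points of $S$ lie within distance $1$ of $a$, of $b$, and of $c$ respectively. From the table, the only points within distance $1$ of $a$ are $a$ and $a^*$ (because $a$'s neighbors in the underlying graph, aside from $a^*$, are at distance $\ge 2$); similarly the unit balls around $c$ and $b$ contain only $\{c,c^*\}$ and $\{b,b_1^*,b_2^*\}$. Any cost-$\le 1$ solution must therefore pick one center from each of these three disjoint sets. A further check rules out choosing $a$, $b$, or $c$ themselves: for instance, $a$ has distance $\ge 2$ to every point of $S_3$, so if $a$ is chosen in place of $a^*$, some point of $S_3$ would be uncovered unless $c^*$ is also replaced by $c$, and one then fails to cover parts of $S_1$. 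A brief case check of these few substitutions leaves only $\{a^*,b_1^*,c^*\}$ and $\{a^*,b_2^*,c^*\}$.

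Third, for the cost-$4$ claim, I would read off from the table that every point of $S_1$ is at distance exactly $4$ from $c$, and every point of $S_3$ is at distance exactly $4$ from $a$ (these are the longest shortest paths in the graph, going through $x$ and the opposite critical vertices). Hence for any nonempty $T\subseteq S_1$ the clustering cost is at least $d(T,c)=4$, and symmetrically for $S_3$. Since $4$ is also an upper bound on the diameter of $\mathcal{M}(n')$, the cost is exactly $4$.

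The main obstacle is purely bookkeeping: there is no genuinely hard step, but the uniqueness part requires one to enumerate a handful of near-optimal solutions (those replacing one starred center by its non-starred counterpart $a$, $b$, or $c$) and confirm that each fails to cover some point of $S_1\cup S_2\cup S_3$ within distance $1$. Getting this enumeration complete, without missing a case, is the only place where care is needed.
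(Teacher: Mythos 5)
The paper itself offers no written proof of this observation---it simply asserts that it ``can be verified by a case-by-case analysis'' from Table~\ref{metric-pairwisedist}---so your proposal is essentially a fleshed-out version of the intended argument, and its overall structure (verify the two unit-cost solutions, establish uniqueness by a covering argument, read the cost-$4$ claims off the $d(c,S_1)=d(a,S_3)=4$ entries) is sound. One detail in your uniqueness step is garbled, though: you write that if $a$ is chosen in place of $a^*$ then ``some point of $S_3$ would be uncovered unless $c^*$ is also replaced by $c$.'' That is backwards---$S_3$ is covered by $c^*$ at distance $1$ regardless of what happens to $a$, and replacing $c^*$ by $c$ would itself uncover $S_3$ (since $d(c,S_3)=2$). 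The actual reason choosing $a$ fails is that $S_1$ becomes uncovered: $d(a,S_1)=2$, and every other admissible center (anything in $\{b,b_1^*,b_2^*\}\cup\{c,c^*\}$) is also at distance at least $2$ from $S_1$. A cleaner route to uniqueness, which avoids the substitution casework entirely, is to observe from the table that the only point of $S$ within distance $1$ of \emph{all} of $S_1$ is $a^*$ (any point of $S_1$ is at distance $2$ from the other $3n'-1\ge 2$ points of $S_1$, and all remaining points are at distance $\ge 2$ from $S_1$); likewise $S_3$ forces $c^*$ and $S_2$ forces one of $b_1^*,b_2^*$. With only three centers available, this pins the solution to $\{a^*,b_1^*,c^*\}$ or $\{a^*,b_2^*,c^*\}$ immediately, and the earlier verification shows both attain cost $1$.
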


% \textbf{Need to mention that x is not a point in the metric space.}
\begin{table}[htp]
	\ra{1.5}
	\centering
	\adjustbox{max width = \textwidth}{
		\begin{tabular}{@{}c|ccccccccccc@{}}
			 & $a^*$ & $b_1^*$ & $b_2^*$ & $c^*$ & $a$ & $b$ & $c$  & $S_1$ & $S_2$ & $S_3$\\ 
			 \toprule
$a^*$& $0$ & $1$& $1$ &$2$ & $1$ & $2$ & $3$ & $1$ & $2$ & $3$\\
$b_1^*$ & $1$ & $0$ & $2$ &$1$ & $2$ & $1$ & $2$  & $2$ & $1$ & $2$ \\
$b_2^*$ & $1$ & $2$ &$0$ &$1$ & $2$ & $1$ & $2$ & $2$ & $1$ & $2$ \\
$c^*$ & $2$ & $1$ & $1$ &$0$ & $3$ & $2$ & $1$  & $3$ & $2$ & $1$ \\
$a$ & $1$ & $2$ & $2$ &$3$ & $0$ & $3$ & $4$  & $2$ & $3$ & $4$ \\
$b$ & $2$ & $1$ & $1$ &$2$ & $3$ & $0$ & $3$ & $3$ & $2$ & $3$ \\
$c$ & $3$ & $2$ & $2$ &$1$ & $4$ & $3$ & $0$  & $4$ & $3$ & $2$ \\
$S_1$ & $1$ & $2$ & $2$ &$3$ & $2$ & $3$ & $4$ & $2$ & $2$ & $2$ \\
$S_2$ & $2$ & $1$ & $1$ &$2$ & $3$ & $2$ & $3$ & $2$ & $2$ & $2$\\
$S_3$ & $3$ & $2$ & $2$ &$1$ & $4$ & $3$ & $2$  & $2$ & $2$ & $2$
	\end{tabular}}
	\caption{Pairwise Distances}
	\label{metric-pairwisedist}
\end{table}

\paragraph{Input Distribution $\mathcal{D}$ on the Processors' Inputs.}
For $i\in [3]$, let $S_i^1, S_i^2, S_i^3$ be an arbitrary equi-partition of $S_i$ (and therefore, $|S_i^j|=n'$ for all $i,j$). 
%Moreover, let $\pi:S \rightarrow [n]$ be a uniformly random bijection. 
Define the sets 
$Y_1^j = \{b_1^*,b_2^*,a\} \cup S_1^j$, $Y_2^j = \{a^*,c^*,b\} \cup S_2^j$ and $Y_3^j = \{b_1^*,b_2^*,c\} \cup S_3^j$, for $j\in [3]$.
%Though the processors are given the renamed points, we will keep referring to the points by their original names, eg. in the definition of $B_i$ later. \textbf{This creates an impression that $Y^j_i$'s form a random equi-partition of the metric space. What we really mean to say is that $\pi$ is a relabeling of the vertices. For example, $Y_1^1$ gets ${b_1^*,b_2^*,a}\cup S_1^1$, but doesn't know which is which, because the names are garbled. I would suggest define a metric space over an appropriate set $S$, different from $[n]$, containing $a^*,b_1^*,$ etc., and let $\pi:S\longrightarrow[n]$ be a random bijection. }
Observe that each $Y_i^j$ contains exactly $n'+3$ points separated pairwise by distance $2$, and moreover, three of the $n'+3$ points are critical.
%For simplicity of notation, we rename $Y_i^j$ to $Y_{3(i-1)+j}$.
We assign the sets $Y_i^j$ randomly to the nine processors after a random relabeling. Formally, we pick a uniformly random bijection $\pi:S\longrightarrow[n]$ as the relabeling and another uniformly random bijection $\Gamma:[3]\times[3]\longrightarrow[9]$, independent of $\pi$, as the assignment. We assign the set $\pi(Y_i^j)$ to processor $\Gamma(i,j)$ for every $i,j$. When a processor or the coordinator queries the distance between $p$ and $q$ where $p,q\in[n]$, it gets $d(\pi^{-1}(p),\pi^{-1}(q))$ as an answer.
%The $i$-th processor's input is $Y_{\Gamma(i)}$, where $\Gamma: [9]\rightarrow [9]$ is a uniformly random permutation.
Note that neither the processors nor the coordinator knows $\pi$ or $\Gamma$. Let the random variable $\mathcal{P}=(\mathcal{P}_1,\ldots,\mathcal{P}_9)$ denote the partition of the set of labels into a sequence of nine subsets induced by $\pi$ and $\Gamma$, where $\mathcal{P}_r$ is the set of labels of points assigned to processor $r$, that is, $\mathcal{P}_{\Gamma(i,j)}=\pi(Y_i^j)$.

\begin{lemma}\label{thm:lb}
Consider any deterministic distributed algorithm for the $9$ processor $3$-center problem on $\mathcal{M}(n')$ and input distribution $\mathcal{D}$, in which each processor communicates an $\ell$-sized subset of its input points, and the coordinator outputs $3$ of the received points.
If $\ell \leq (n'+3)/54$, then with probability at least $1/84$, the output is no better than a $4$-approximation.
\end{lemma}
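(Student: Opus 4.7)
By Observation~\ref{obs_lb_opt}, the output has clustering cost $4$ whenever it lies entirely in $S_1$ or entirely in $S_3$ (while $\opt=1$); call this the bad event $B$. The plan is to prove $\Pr[B]\geq 1/84$ via $\Pr[B]\geq\Pr[E]\cdot\Pr[B\mid E]$, where $E$ is the event ``no point of $C$ reaches the coordinator''; I will show $\Pr[E]\geq 1/2$ and $\Pr[B\mid E]\geq 1/42$.

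\textbf{Step~1: bounding $\Pr[E]$.} Every $Y_i^j$ has $n'+3$ points with all pairwise distances equal to $2$, so distance queries inside a processor carry no information, and its $\ell$-subset depends only on the label set. Uniformity of $\pi$ on $S$ then implies that the communicated $\ell$-subset is marginally uniform over $\ell$-subsets of $Y_i^j$, so each of the three critical points of $Y_i^j$ is sent with probability $\ell/(n'+3)$. Union-bounding over the $9$ processors and their $3$ critical points apiece gives $\Pr[E^c]\leq 27\ell/(n'+3)\leq 1/2$.

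\textbf{Step~2: bounding $\Pr[B\mid E]$.} Under $E$ every received label corresponds to a non-critical point, and all pairwise distances among such points are $2$; hence the coordinator's distance oracle is also useless, and its output is a deterministic function of the nine bags of labels (the view $V$). Let $T:[9]\to[3]$ be the type function $T(r)=i$ iff processor $r$ received a copy of $Y_i^j$; it is induced by $\Gamma$ and is marginally uniform over the $9!/(3!)^3=1680$ valid assignments. Because under $E$ the nine bags are pairwise disjoint as label sets (critical labels being the only ones that can be shared across bags), each output label has a well-defined sender, giving a triple $(R_1,R_2,R_3)\in[9]^3$ determined by $V$, and
\[
B\;=\;\bigl\{T(R_1)=T(R_2)=T(R_3)\in\{1,3\}\bigr\}.
\]
The central claim is that $T$ remains uniformly distributed after conditioning on $(V,E)$. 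Granting it, a case split on $|\{R_1,R_2,R_3\}|\in\{1,2,3\}$ shows the worst case is three distinct senders, in which
\[
\Pr[B\mid V,E]\;\geq\;2\cdot\frac{3}{9}\cdot\frac{2}{8}\cdot\frac{1}{7}\;=\;\frac{1}{42},
\]
and averaging over $V$ preserves the inequality. Combining with Step~1 yields the lemma.

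\textbf{Main obstacle.} The uniform-posterior claim is the crux. Its intuition is that the equidistance of the $n'+3$ points in every $Y_i^j$ hides each processor's type from itself, and conditioning on $E$ removes from the coordinator's view the only type-asymmetric features of the metric---namely, the distances from the critical points to $S_1,S_2,S_3$, which differ across types. Formally verifying it means producing, for each permutation $\mu$ of $[3]$, a measure-preserving bijection on the $(\pi,\Gamma)$-space that permutes $T$ by $\mu$ while preserving $(V,E)$. The transposition $(1\,3)$ admits an explicit witness: take the bijection $\rho:S\to S$ swapping $a\leftrightarrow c$ and $S_1\leftrightarrow S_3$ (with a matching $j$-preserving row swap of $\Gamma$). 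The swaps $(1\,2)$ and $(2\,3)$ have no such $\rho$ on the full metric, since the critical triples $\{b_1^*,b_2^*,a\}$, $\{a^*,c^*,b\}$, $\{b_1^*,b_2^*,c\}$ overlap inconsistently; these remaining symmetries therefore must be established by exploiting the invisibility of critical points once one conditions on $E$, which is where the equidistance structure of the metric does its real work.
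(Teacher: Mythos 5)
Your architecture matches the paper's: a $1/2$ bound on the probability that no critical label reaches the coordinator (your Step~1 is the same union bound over the $27$ critical slots), multiplied by a $1/42$ bound on the probability that the three sending processors form exactly the type-$1$ or the type-$3$ triple, with three distinct senders as the worst case. The gap is the one you flag yourself: the ``central claim'' that the type function $T$ is uniform over the $1680$ balanced assignments conditioned on $(V,E)$ is asserted, not proved, and the whole bound rests on it. As you note, only the $1\leftrightarrow 3$ swap is realized by an automorphism of the instance ($a\leftrightarrow c$, $S_1\leftrightarrow S_3$), and that symmetry by itself only yields $\Pr[B\mid V,E]=2\Pr[T^{-1}(1)=\{R_1,R_2,R_3\}\mid V,E]$; it does not deliver the needed lower bound of $1/84$ on the right-hand side. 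The difficulty is real: the event $\{O_r=o_r\}$ is a constraint on the \emph{whole} bag of processor $r$, including the labels of its three critical points, so permuting types changes which critical labels sit in which bag and hence changes the event being measured. A complete proof must show that, after summing over the placement of the seven critical labels (using that under $E$ none of them appears in any $o_r$), the counts for different balanced $t$ agree, or at least that the conditional mass on the particular triple $\{R_1,R_2,R_3\}$ determined by $v$ is at least $1/84$. Nothing in your write-up does this, so the proof is incomplete.

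Be warned that you cannot close the gap by importing the paper's corresponding step. The paper conditions on the full label partition $\mathcal{P}$ and invokes Observation~\ref{obs_gamma_info}, that $\Gamma$ is uniform over all $9!$ bijections given $\mathcal{P}$. But the bags share critical labels ($|Y_i^j\cap Y_i^{j'}|=3$, $|Y_1^j\cap Y_3^{j'}|=2$, $|Y_2^j\cap Y_1^{j'}|=0$), so $\mathcal{P}$ reveals the pairwise intersection sizes of the bags, which pins down $\Gamma(\{2\}\times[3])$ exactly and the unordered pair $\{\Gamma(\{1\}\times[3]),\Gamma(\{3\}\times[3])\}$; only $2\cdot(3!)^3$ bijections are consistent with a given value of $\mathcal{P}$, not $9!$, so that observation cannot be used as stated. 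Your instinct to condition on the coordinator's view $V$ rather than on $\mathcal{P}$ is the right one precisely because under $E$ the messages are disjoint and the tell-tale intersections disappear; but that choice leaves you facing exactly the uniformity claim you did not verify. Until that claim (or the weaker statement $\Pr[T^{-1}(1)=\{R_1,R_2,R_3\}\mid V=v,E]\ge 1/84$) is established, the lemma is not proved.
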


Although the probability with which the coordinator fails to outputs a better-than-$4$-approximation is only $1/84$, it can be \emph{amplified} to $1-\varepsilon$, for any $\varepsilon > 0$.
We discuss the amplification result before presenting the proof of the above lemma.

\begin{lemma}\label{lem:amplifiedlb}
Let $\varepsilon>0$ and $c<1/486$ be arbitrary constants, and let
\[\alpha=\left\lceil\frac{84\ln(1/\varepsilon)}{1-486c}\right\rceil\]
Then there exists an instance of the $(3\alpha)$-center problem such that, in the distributed setting with $9$ processors, each communicating at most a $c$ fraction of its input points to the coordinator, the coordinator fails to output a better than $4$-approximation with probability at least $1-\varepsilon$.
\end{lemma}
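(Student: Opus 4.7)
The plan is to amplify Lemma~\ref{thm:lb} by a direct-product construction. I would construct the amplified instance as $\alpha$ disjoint copies of $\mathcal{M}(n')$ for some fixed $n'$, glued with a suitable constant inter-copy distance so that (a) the $(3\alpha)$-center optimum is $1$, achieved by selecting three optimal centers per copy, and (b) the overall cost of any solution is the maximum of per-copy costs. For the input distribution, I would sample $\alpha$ independent pairs $(\pi_t, \Gamma_t)_{t \in [\alpha]}$ of random bijections as in $\mathcal{D}$, one per copy, and let processor $r$ receive, for each copy $t$, the subset $\pi_t(Y_i^j)$ whenever $\Gamma_t(i,j) = r$. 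Each processor's total input then has $\alpha(n'+3)$ points, and it is allowed to send at most a $c$-fraction, namely $c\alpha(n'+3)$ points.

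Next I would execute a budget-averaging step. For each processor, its total budget of $c\alpha(n'+3)$ points permits more than $(n'+3)/54$ points to be sent in at most $54c\alpha$ copies; union-bounding over the $9$ processors, at most $486c\alpha$ copies are ``bad'' and the remaining $(1-486c)\alpha$ copies are ``good'' (every processor sends at most $(n'+3)/54$ points in such a copy). This bound holds deterministically for every realization. For each good copy $t$, I would condition on the inputs, and hence on the processor messages, in every other copy. Because the relabelings are independent across copies, the conditional distribution of copy $t$'s input is still $\mathcal{D}$, and the algorithm restricted to copy $t$ becomes a deterministic distributed algorithm of the form considered in Lemma~\ref{thm:lb} with $\ell = (n'+3)/54$. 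The lemma then yields conditional probability at least $1/84$ that the three centers chosen for copy $t$ produce cost $\geq 4$ in that copy, which forces the overall cost to $\geq 4$. Independence across copies combined with the $(1-486c)\alpha$ good-copy lower bound yields overall success probability at most $(1-1/84)^{(1-486c)\alpha} \leq e^{-(1-486c)\alpha/84} \leq \varepsilon$ for the specified $\alpha$; Yao's minimax then extends the bound to randomized algorithms.

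The hard part will be matching Lemma~\ref{thm:lb}'s assumption exactly at the per-copy level, since that lemma assumes the coordinator outputs three received points, whereas the amplified coordinator outputs $3\alpha$ centers with a potentially uneven distribution across copies. I would resolve this by engineering the inter-copy glue so that any solution placing fewer than three centers in some copy automatically incurs cost $\geq 4$---in that case it ``fails'' without needing Lemma~\ref{thm:lb}, and otherwise the per-copy analysis applies. A secondary concern is justifying that conditioning on other copies preserves the hypotheses of the lemma, in particular the per-processor message-size bound; this follows because the good-copy condition depends on the input only through a deterministic constraint, so for a good copy the conditional algorithm still respects $\ell = (n'+3)/54$.
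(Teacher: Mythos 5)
Your proposal follows essentially the same route as the paper: $\alpha$ far-separated disjoint copies of $\mathcal{M}(n')$ with independent input distributions, the same deterministic budget-averaging count of at most $54c\alpha$ overloaded copies per processor and hence $486c\alpha$ bad copies overall, per-copy application of Lemma~\ref{thm:lb} to the $(1-486c)\alpha$ good copies, and the identical product bound $(1-1/84)^{(1-486c)\alpha}\le e^{-(1-486c)\alpha/84}\le\varepsilon$. The additional care you take about conditioning across copies and about the coordinator distributing its $3\alpha$ centers unevenly addresses details the paper glosses over, though your proposed fix for the latter does not quite work as stated: no choice of inter-copy separation can force a copy containing fewer than three centers to incur cost at least $4$, since a single well-placed center such as $b_1^*$ already covers its copy within distance $2$.
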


\begin{proof}
The underlying metric space consists of $\alpha$ disjoint copies of $\mathcal{M}(n')$ separated by an arbitrarily large distance from one another. 
The point set of each copy is distributed to the nine processors as described earlier, and these distribtions are independent. Thus, each processor receives $\alpha\cdot(n'+3)$ points. Observation~\ref{obs_lb_opt} implies that in this instance, the optimum set of $3\alpha$ centers (the union of optimum sets of $3$ centers in each copy) has unit cost. Also, in order to get a better than $4$-approximation, the coordinator must output a better than $4$-approximate solution from every copy. We prove that this is unlikely.
%Each processor receives the corresponding set of points from each copy, as in the description of $\mathcal{D}$.

By our assumption, each processor sends at most $c\alpha\cdot(n'+3)$ points to the coordinator, where $c<1/486$. Therefore, for each processor, there exist at most $54c\alpha$ copies from which it sends more than $(n'+3)/54$ points to the coordinator. Since we have $9$ processors, there exist at most $9\times54c\alpha=486c\alpha$ copies from which more than $(n'+3)/54$ points are sent by some processor. From each of the remaining $(1-486c)\alpha$ copies, no processor sends more than $(n'+3)/54$ points. By Lemma~\ref{thm:lb}, the coordinator succeeds on each of these copies independently with probability at most $1-1/84$, in producing a better than $4$ approximation. Therefore, the probability that the coordinator succeeds in all the $(1-486c)\alpha$ copies is bounded as
\[\left(1-\frac{1}{84}\right)^{(1-486c)\alpha}\leq\exp\left(-\frac{1-486c}{84}\cdot\alpha\right)\leq\varepsilon\text{,}\]
where the last inequality follows by substituting the value of $\alpha$. Thus, the coordinator fails to produce a better than $4$-approximation with probability at least $1-\varepsilon$.
\end{proof}

\begin{proof}[\textbf{Proof of Lemma~\ref{thm:lb}}]
Consider any one of the nine processors. It gets the set $\pi(Y_i^j)$ for a uniformly random $(i,j)\in[3]\times[3]$. Since $\pi$ is a uniformly random labeling and points in $Y_i^j$ are pairwise equidistant, the processor is not able to identify the three critical points in its input. This happens even if we condition on the values of $\Gamma$. Formally,
%nor does it get any information about $(i,j)$. Formally,
%\begin{observation}\label{obs_info}
conditioned on $\Gamma$ and $\mathcal{P}$,
%the set of labels that the processor receives, 
all subsets of $\mathcal{P}_r$ of size $3$ are equally likely to be the set of labels of the three critical points in processor $r$'s input, i.e., $Y_i^j$ where $(i,j)=\Gamma^{-1}(r)$.
%, and independently, the processor is equally likely to have received $Y_i^j$ for all $(i,j)$.
%\end{observation}
As a consequence, the probability that at least one of the three critical points appears in the set of at most $\ell$ points the processor communicates is at most $3\ell/|Y_i^j|=3\ell/(n'+3)$, even when we condition on $\Gamma$.
%$\leq3\ell/n'$. 
For a given processor $r\in[9]$, let $O_r$ be the set of labels it sends to the coordinator, and define $B_r$ to be the event that $O_r$ contains the label of a critical point. Then $\Pr[B_r\mid\Gamma,\mathcal{P}]\leq3\ell/(n'+3)$. Next, define $G$ to be the event that no processor sends the label of any critical point to the coordinator, that is, $G = \cap_{r=1}^9 B_r^c$, where $B_r^c$ is the complement of $B_r$. Then by the union bound and the fact that $\ell\leq (n'+3)/54$, we have for every partition $P$ of the label set and every bijection $\gamma:[3]\times[3]\longrightarrow[9]$,
\begin{equation}\label{eqn_G}
\Pr[G\mid\Gamma=\gamma,\mathcal{P}=P] \geq 1-9\cdot\frac{3\ell}{n'+3}\geq \frac{1}{2}\text{.}
\end{equation}

%The second consequence of Observation \ref{obs_info} is that the sets $O_r$ are all independent of the random assignment $\Gamma$, even conditioned on $G$. Therefore the coordinator fails to get any information about $\Gamma$ from the nine sets $O_1,\ldots,O_9$ it receives. Formally, 

Suppose the coordinator outputs $O$, a set of three labels, on receiving $O_1,\ldots,O_9$. Then $O\subseteq O_{r_1}\cup O_{r_2}\cup O_{r_3}$ for some $r_1,r_2,r_3\in[9]$. Observe that $O_1,\ldots,O_9$, $O$, and $\{r_1,r_2,r_3\}$ are all completely determined\footnote{If $O$ intersects less than three of the $O_r$'s, then we define $\{r_1,r_2,r_3\}$ to be the lexicographically smallest set such that $O\subseteq O_{r_1}\cup O_{r_2}\cup O_{r_3}$.} by $\mathcal{P}$. In contrast, due to the random labeling $\pi$, the mapping $\Gamma$ is independent of $\mathcal{P}$. Therefore,
\begin{observation}\label{obs_gamma_info}
Conditioned on $\mathcal{P}$, the bijection $\Gamma$ is equally likely to be any of the $9!$ bijections from $[3]\times[3]$ to $[9]$.
\end{observation}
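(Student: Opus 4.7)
\textbf{Proof proposal for Observation \ref{obs_gamma_info}.} The plan is to compute the joint probability $\Pr[\Gamma=\gamma,\mathcal{P}=P]$ for an arbitrary bijection $\gamma:[3]\times[3]\to[9]$ and an arbitrary ordered partition $P=(P_1,\ldots,P_9)$ of $[n]$ into nine blocks of size $n'+3$, and then observe that this joint probability does not depend on $\gamma$. Since $\mathcal{P}$ takes values only in such partitions, this will immediately yield
\[
\Pr[\Gamma=\gamma\mid\mathcal{P}=P]\;=\;\frac{\Pr[\Gamma=\gamma,\mathcal{P}=P]}{\sum_{\gamma'}\Pr[\Gamma=\gamma',\mathcal{P}=P]}\;=\;\frac{1}{9!}
\]
for every $\gamma$, which is what we need.

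First I would use independence of $\pi$ and $\Gamma$ to factor $\Pr[\Gamma=\gamma,\mathcal{P}=P]=\tfrac{1}{9!}\cdot\Pr[\mathcal{P}=P\mid\Gamma=\gamma]$. Next I would unpack the definition of $\mathcal{P}$: conditioned on $\Gamma=\gamma$, the event $\mathcal{P}=P$ is precisely the event that $\pi(Y_i^j)=P_{\gamma(i,j)}$ holds as set equalities for every $(i,j)\in[3]\times[3]$. Since the $Y_i^j$ partition $S$ into nine blocks of size $n'+3$, and the $P_r$'s partition $[n]$ into nine blocks of size $n'+3$, the number of bijections $\pi:S\to[n]$ satisfying all nine set equalities is exactly $\prod_{(i,j)}(n'+3)!=((n'+3)!)^9$ (independently permuting each block). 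Dividing by $n!$, the total number of bijections, gives
\[
\Pr[\mathcal{P}=P\mid\Gamma=\gamma]\;=\;\frac{((n'+3)!)^9}{n!},
\]
a quantity that is manifestly independent of $\gamma$. Combining this with the factorization above completes the argument.

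The calculation above is essentially the only substantive step, and it is routine; there is no real obstacle beyond being careful that the counting of constrained bijections really is $\gamma$-free. The reason this works is structural: every $Y_i^j$ has the same cardinality $n'+3$, so relabeling by the uniformly random $\pi$ washes out any $\gamma$-dependence in the set-valued conditional distribution of $\mathcal{P}$. If the blocks had different sizes, the conditional probability would depend on $\gamma$ through a product of differently-sized factorials, and the observation would fail; it is precisely the symmetry of the construction (equal-sized inputs) that makes $\Gamma$ information-theoretically hidden from the coordinator given $\mathcal{P}$.
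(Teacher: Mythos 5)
Your argument breaks at its very first structural claim: the sets $Y_i^j$ do \emph{not} partition $S$. They overlap exactly in the critical points: $b_1^*$ and $b_2^*$ lie in all six of $Y_1^1,Y_1^2,Y_1^3,Y_3^1,Y_3^2,Y_3^3$, while $a^*$, $c^*$, $a$, $b$, $c$ each lie in three of the $Y_i^j$'s. Arithmetically, $\sum_{i,j}|Y_i^j|=9(n'+3)=9n'+27$, whereas $|S|=n=9n'+7$, so nine pairwise disjoint blocks of size $n'+3$ cannot even fit inside $[n]$: the ``ordered partitions $P$'' you sum over do not exist, and both the count $((n'+3)!)^9$ of constrained bijections and your displayed formula for $\Pr[\mathcal{P}=P\mid\Gamma=\gamma]$ are meaningless. (Despite the paper's wording, $\mathcal{P}$ is a tuple of nine \emph{overlapping} label sets, i.e., a cover of $[n]$, not a partition.)

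The overlap is not a repairable technicality; it changes the answer. Carrying out your counting correctly: fix $P$ in the support of $\mathcal{P}$, arising from some $(\pi_0,\gamma_0)$. A bijection $\pi$ with $\pi(Y_i^j)=P_{\gamma(i,j)}$ for all $(i,j)$ exists if and only if $\sigma=\gamma_0^{-1}\circ\gamma$ preserves the intersection pattern of the $Y_i^j$'s --- row $2$ must map to row $2$ (those three sets are disjoint from the other six), and rows $1$ and $3$ may only be permuted internally or swapped --- giving just $2\cdot(3!)^3=432$ admissible $\gamma$'s, each with the same number $2!\cdot 3!\cdot((n')!)^9$ of admissible $\pi$'s, and \emph{zero} admissible $\pi$'s for the remaining $9!-432$ bijections. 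Hence $\Pr[\mathcal{P}=P\mid\Gamma=\gamma]$ genuinely depends on $\gamma$; indeed $\Gamma$ is not even independent of $\mathcal{P}$, since $|\mathcal{P}_1\cap\mathcal{P}_2|=|Y_{\Gamma^{-1}(1)}\cap Y_{\Gamma^{-1}(2)}|\in\{0,2,3\}$ is a nonconstant function of $\Gamma$. So conditioned on $\mathcal{P}=P$, the bijection $\Gamma$ is uniform over the $432$ compatible bijections, not over all $9!$: your method (counting constrained bijections) is the right one, but executed correctly it \emph{refutes} the observation as literally stated, and the disjointness error is precisely what hides this. You should also be aware that the paper's own one-line justification --- ``due to the random labeling $\pi$, the mapping $\Gamma$ is independent of $\mathcal{P}$'' --- glosses over the same overlap issue; fixing the downstream calculation of $\Pr[G\cap G']$ requires either restating the observation for the $432$ compatible bijections (using that, conditioned on $\mathcal{P}$, the two three-sets of processors holding $Y_1^{\cdot}$ and $Y_3^{\cdot}$ are still exchangeable) or conditioning on the coordinator's actual view rather than on all of $\mathcal{P}$.
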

Next, define $G'$ to be the event that $\{r_1,r_2,r_3\}$ is either $\Gamma(\{(1,1),(1,2),(1,3)\})$ or $\Gamma(\{(3,1),(3,2),(3,3)\})$. In words, $G'$ is the event that the coordinator outputs labels of three points, all of which are contained
%in $\{S_1\cup\{a\}\}$ or in $\{S_3\cup\{c\}\}$,
in $Y_1^1\cup Y_1^2\cup Y_1^3$ or in $Y_3^1\cup Y_3^2\cup Y_3^3$.
Note that the event $G'\cap G$ implies that the coordinator's output is contained
in $S_1^1\cup S_1^2\cup S_1^3=S_1$ or in $S_3^1\cup S_3^2\cup S_3^3=S_3$. Therefore, by Observation \ref{obs_lb_opt}, event $G'\cap G$ implies that the coordinator fails to output a better than $4$-approximation. We are now left to bound $\Pr[G'\cap G]$ from below.

Since the set $\{r_1,r_2,r_3\}$ is completely determined by $\mathcal{P}$, the event $G'$ is completely determined by $\mathcal{P}$ and $\Gamma$: for any $\mathcal{P}$, there exist exactly $2\cdot3!\cdot6!$ values of $\Gamma$ which cause $G'$ to happen. Formally,
\begin{observation}\label{obs_G'}
For every partition $P$ of the label set, there exist exactly $2\cdot3!\cdot6!$ bijections $\gamma:[3]\times[3]\longrightarrow[9]$ such that $\Pr[G'\mid\mathcal{P}=P,\Gamma=\gamma]=1$, whereas $\Pr[G'\mid\mathcal{P}=P,\Gamma=\gamma']=0$ for all the other bijections $\gamma':[3]\times[3]\longrightarrow[9]$.
\end{observation}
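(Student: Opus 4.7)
The key point is that, as already noted in the text, the processors' inputs $O_1,\ldots,O_9$ and hence the indices $\{r_1,r_2,r_3\}$ are functions solely of $\mathcal P$. So once we fix $\mathcal P=P$, the set $R:=\{r_1,r_2,r_3\}\subseteq[9]$ is a fixed $3$-element subset of $[9]$, and the event $G'$ becomes a deterministic condition on $\Gamma$: namely that $\Gamma$ sends either the ``row-$1$'' index set $I_1:=\{(1,1),(1,2),(1,3)\}$ or the ``row-$3$'' index set $I_3:=\{(3,1),(3,2),(3,3)\}$ onto $R$. Therefore $\Pr[G'\mid\mathcal P=P,\Gamma=\gamma]$ is $1$ or $0$ for every $\gamma$, which already gives the ``dichotomy'' half of the statement.

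It then remains to count the bijections $\gamma:[3]\times[3]\to[9]$ that satisfy $\gamma(I_1)=R$ or $\gamma(I_3)=R$. For the first case, there are $3!$ ways to biject $I_1$ onto $R$ and independently $6!$ ways to biject the remaining six pairs in $([3]\times[3])\setminus I_1$ onto $[9]\setminus R$, giving $3!\cdot 6!$ bijections; by the same argument the second case also contributes $3!\cdot 6!$. The two cases are disjoint, since any bijection $\gamma$ sends the disjoint index sets $I_1$ and $I_3$ to disjoint image sets, so at most one of $\gamma(I_1)$, $\gamma(I_3)$ can equal $R$. Adding gives exactly $2\cdot 3!\cdot 6!$ bijections on which $G'$ occurs.

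The only subtlety I foresee is being explicit that $\{r_1,r_2,r_3\}$ really is a function of $\mathcal P$ alone (and does not depend on $\Gamma$). This follows because each processor's algorithm is deterministic and operates only on its input labels (which form $\mathcal P_r$) together with distance queries that return values depending on $\pi^{-1}$ applied to labels; however, the definition of $\mathcal P$ already encodes the labels actually seen by each processor, and the distances between labels in any single $\mathcal P_r$ are all equal to $2$ by construction of the $Y_i^j$'s, so the outputs $O_1,\ldots,O_9$, the coordinator's output $O$, and the tie-broken choice of $\{r_1,r_2,r_3\}$ depend only on $\mathcal P$. Once this is spelled out, the counting argument above finishes the proof; no real obstacle remains.
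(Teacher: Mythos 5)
Your proof is correct and takes essentially the same approach as the paper: the paper's entire justification is that, since $\{r_1,r_2,r_3\}$ is a function of $\mathcal{P}$ alone, the event $G'$ is determined by $(\mathcal{P},\Gamma)$, and exactly $2\cdot3!\cdot6!$ bijections realize it; you merely make explicit the counting ($3!\cdot6!$ per row, with disjointness of the two cases because $\gamma(I_1)$ and $\gamma(I_3)$ are disjoint), which the paper leaves implicit. One remark on your final paragraph: the ``all pairwise distances within $\mathcal{P}_r$ equal $2$'' argument justifies determinism only of the processors' outputs $O_1,\ldots,O_9$; the coordinator is also permitted distance queries, and distances between labels received from \emph{different} processors are not determined by $\mathcal{P}$ alone (they depend on which underlying points are critical), so the determinism of $O$ and of $\{r_1,r_2,r_3\}$ is really the paper's standing assertion stated just before the observation (it genuinely holds only once one restricts attention to, or conditions on, situations where all queried distances equal $2$, e.g.\ on the event $G$) rather than a consequence of your argument --- but since the paper grants that fact as part of the setup, your core dichotomy-plus-counting proof of the observation itself stands.
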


Therefore, we have,
\begin{eqnarray*}
\Pr[G\cap G'] & = & \sum_{P,\gamma}\Pr[G\cap G'\mid\mathcal{P}=P,\Gamma=\gamma]\cdot\Pr[\mathcal{P}=P,\Gamma=\gamma]\\
 & = & \sum_{(P,\gamma):\Pr[G'\mid\mathcal{P}=P,\Gamma=\gamma]=1}\Pr[G\mid\mathcal{P}=P,\Gamma=\gamma]\cdot\Pr[\Gamma=\gamma\mid\mathcal{P}=P]\cdot\Pr[\mathcal{P}=P]\\
 & \geq & \sum_P\sum_{\gamma:\Pr[G'\mid\mathcal{P}=P,\Gamma=\gamma]=1}\frac{1}{2}\cdot\frac{1}{9!}\cdot\Pr[\mathcal{P}=P]\\
 & = & \frac{1}{2}\cdot\frac{1}{9!}\cdot\sum_P\left|\{\gamma:\Pr[G'\mid\mathcal{P}=P,\Gamma=\gamma]=1\}\right|\cdot\Pr[\mathcal{P}=P]\\
 & = & \frac{2\cdot3!\cdot6!}{2\cdot9!}\cdot\sum_P\Pr[\mathcal{P}=P]\\
 & = & \frac{1}{84}\text{.}
\end{eqnarray*}

% By Observation \ref{obs_gamma_info} and some elementary combinatorics,
% %\[\Pr\left[\{r_1,r_2,r_3\}=\Gamma(\{(1,1),(1,2),(1,3)\})\right]=\Pr[\{r_1,r_2,r_3\}=\Gamma(\{(3,1),(3,2),(3,3)\})]=\frac{3!\cdot6!}{9!}=1/82\text{,}\]
% \[\Pr\left[\{r_1,r_2,r_3\}=\Gamma(\{1\}\times\{1,2,3\}\})\mid G\right]=\Pr[\{r_1,r_2,r_3\}=\Gamma(\{3\}\times\{1,2,3\}\})\mid G]=\frac{3!\cdot6!}{9!}=\frac{1}{84}\text{,}\]
% and therefore, $\Pr[G'\mid G]=1/42$. Thus,
% \[\Pr[G'\cap G]=\Pr[G'\mid G]\cdot\Pr[G]\geq\frac{1}{42}\cdot\frac{1}{2}=\frac{1}{84}\text{,}\]
Here, we used Observation \ref{obs_G'} for the second and fourth equality, and Equation (\ref{eqn_G}) and Observation \ref{obs_gamma_info} for the inequality. Thus, the coordinator fails to output a better than $4$-approximation with probability at least $1/84$, as required.
\end{proof}

Using Lemma~\ref{lem:amplifiedlb} along with Yao's lemma, we get our main lower-bound theorem.
\begin{theorem}\label{thm:yao}
There exists $c > 0$ such that for any $\varepsilon > 0$, with $k = \Theta(\log (1/\varepsilon))$, any randomized distributed algorithm for $k$-center where each processor communicates at most $cn$ points to the coordinator, who outputs a subset of those points as the solution, is no better than $4$-approximation with probability at least $1 - \varepsilon$.
\end{theorem}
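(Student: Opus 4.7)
The plan is to combine Lemma~\ref{lem:amplifiedlb} with Yao's minimax principle. Recall that Yao's principle says: to lower bound the failure probability of an optimal randomized algorithm on its worst-case input, it suffices to exhibit a single distribution $\mathcal{D}^*$ over inputs on which every deterministic algorithm respecting the communication budget fails with at least that probability. So the whole argument reduces to showing that the distribution from Lemma~\ref{lem:amplifiedlb} is such a bad distribution, once the per-processor budget is translated appropriately.

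First I would fix the distribution $\mathcal{D}^*$ to be the one constructed in the proof of Lemma~\ref{lem:amplifiedlb}: take $\alpha=\lceil 84\ln(1/\varepsilon)/(1-486c_0)\rceil$ disjoint, widely separated copies of $\mathcal{M}(n')$, distribute each copy to the nine processors using independent random relabelings $\pi$ and random assignments $\Gamma$. The total number of points is $n=\alpha(9n'+7)$, the number of centers is $k=3\alpha=\Theta(\log(1/\varepsilon))$, and every processor receives exactly $\alpha(n'+3)$ input points.

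Next I would translate the communication budget. The theorem's hypothesis is that each processor transmits at most $cn$ points. With $n=\alpha(9n'+7)$ and per-processor input size $\alpha(n'+3)$, the fraction of its input each processor sends is at most
\[
\frac{cn}{\alpha(n'+3)}=c\cdot\frac{9n'+7}{n'+3}\le 9c\cdot\frac{1+7/(9n')}{1+3/n'}\text{,}
\]
which tends to $9c$ as $n'\to\infty$. Picking $c>0$ small enough that this ratio is strictly less than the threshold $c_0<1/486$ required by Lemma~\ref{lem:amplifiedlb} (for instance any $c<1/(9\cdot 486)$ works once $n'$ is large) puts us exactly in the regime of that lemma. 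Therefore for every deterministic algorithm $\mathcal{A}$ satisfying the budget,
\[
\Pr_{I\sim\mathcal{D}^*}[\mathcal{A}(I)\text{ is no better than a }4\text{-approximation}]\ge 1-\varepsilon\text{.}
\]

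Finally I would invoke Yao's lemma. Any randomized distributed algorithm $\mathcal{R}$ is a distribution over deterministic algorithms $\mathcal{A}$ that each obey the same per-processor communication bound. Exchanging the order of expectation, there exists some input $I^*$ in the support of $\mathcal{D}^*$ on which the randomized algorithm fails to produce a better than $4$-approximation with probability at least $1-\varepsilon$. This gives the claimed worst-case guarantee with $k=\Theta(\log(1/\varepsilon))$. The only subtle point, and what I expect to be the main place to be careful, is the constant conversion between ``fraction of total input'' and ``fraction of each processor's input''; once that bookkeeping is done, the theorem follows essentially immediately from Lemma~\ref{lem:amplifiedlb} and Yao's principle.
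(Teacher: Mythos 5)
Your proof is correct and follows exactly the route the paper intends (the paper only states ``Using Lemma~\ref{lem:amplifiedlb} along with Yao's lemma''): you use the multi-copy hard distribution as the input distribution for Yao's minimax principle and correctly handle the one nontrivial bookkeeping step, namely that a budget of $cn$ points per processor is at most a $9c$ fraction of that processor's $\alpha(n'+3)$-point input, so any $c<1/(9\cdot 486)$ places every deterministic algorithm in the regime of Lemma~\ref{lem:amplifiedlb}.
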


%%% Local Variables:
%%% mode: latex
%%% TeX-master: "main"
%%% End:

%%% Local Variables:
%%% mode: latex
%%% TeX-master: "main"
%%% End:

\section{Experiments}
All experiments are run on HP EliteBook 840 G6 with Intel\textsuperscript{\textregistered} Core\texttrademark~i7-8565U
CPU \@ 1.80GHz having 4 cores and 15.5 GiB of RAM, running Ubuntu 18.04 and Anaconda.  We make our code available on GitHub\footnote{\url{https://github.com/sagark4/fair_k_center}}.

We perform our experiments on a massive synthetic dataset, several real datasets, and small synthetic datasets.
The same implementation is used for the large synthetic dataset and the real datasets, but a slightly different implementation is used for small synthetic datasets.
Before presenting the experiments, we first discuss the implementation details that are common to all three experiments.
Specific details are mentioned along with the corresponding experimental setup.
For all our algorithms if the solution size is less than $k$, then we extend the
solution using an arbitrary solution of size $k$ (which also certifies the simple upper bound).
In the case of the distributed algorithm, an arbitrary solution is computed using only the points received by the coordinator.  Also, one extra pass is spent into computing solution cost.
In the processors' algorithm, we return $r_i$ along with $(P_i,L_i)$.
No randomness is used for any optimization, making our algorithms completely deterministic.  Access to distance between two points is via a method \texttt{get\_distance()}, whose implementation
depends on the dataset.  

We use the code shared by Kleindessner et al.\ for their algorithm on github\footnote{\url{https://github.com/matthklein/fair_k_center_clustering}}, exactly as is, for all datasets.  In their code, the distance is assumed to be stored in an $n \times n$ distance matrix.

As mentioned in the introduction, we give new implementations for
existing algorithms---those of Chen et al.\ and Kale (we choose to implement Kale's two-pass algorithm only, because it is the better of his two).  
Instead of using a matroid intersection subroutine, which can have running time of super quadratic in $n$,
we reduce the postprocessing steps of these algorithms to finding a maximum
matching in an appropriately constructed graph (for details, see \texttt{HittingSet()} in Appendix~\ref{app:alg}).  We further reduce maximum matching to max-flow which is computed using Python package NetworkX.
This results in a postprocessing time of
$O(k^2n)$ for Chen et al.~and $O(k^3)$ for Kale.  
This step itself makes Chen et al.'s algorithm practical for much larger $n$
than what is observed by Kleindessner et al.

\paragraph*{Handling the guesses}
For all algorithms (except Kleindessner et al.'s), we use $\eps = 0.1$.  For Chen et al.'s algorithm, we use geometric guessing starting with the lower bound given by the farthest point heuristic
(call this Gonzalez's lower bound).
%Kale's algorithm uses geometric guessing with the simple lower and upper bounds to minimize the number of passes over the stream.  
For our two-pass algorithm and Kale's algorithm, we use geometric guessing starting with the simple lower bound until the upper bound given by an arbitrary solution.
The values for the guesses $\tau$ in the coordinator's algorithm are scaled down by a factor of $5.1$. 
Concretely, let $r_1$ be the maximum among the $r_i$'s. 
Then the guesses take values in $\frac{r_1}{5.1},\frac{1.1\cdot r_1}{5.1},\frac{(1.1)^2\cdot r_1}{5.1},\ldots$, until a feasible solution is found.
The factor of $5.1$ ensures that when \texttt{getPivots()} is run with the parameter $10\tau < 2r_1$, we end up picking at least $k$ pivots from $X'$.
% This follows from the fact that we will be able to pick roughly rank many pivots from $P_1$ when \texttt{getPivots()} is run with the parameter $2r_1$; this fact is a consequence of Lemma \ref{lem_pigeonhole}.

We now proceed to present our experiments. 
To show the effectiveness of our algorithms on massive datasets, we run them on
a 100 GB synthetic dataset which is a collection of 4,000,000 points in 1000
dimensional Euclidean space, where each coordinate is a uniformly random real in
$(0, 10000)$.  
Each point is assigned one of the four groups uniformly at random,
and capacity of each group is set to $2$.
Just reading this data file takes more than four minutes.
Our two-pass algorithm takes 1.95
hours and our distributed
algorithm takes 1.07 hours; both compute a solution of almost the same cost, even though their theoretical guarantees are different.  Here, we use block size of $10000$ in the distributed algorithm, i.e., the number of
processors $\ell = 400$.
%We use the same implementation of
%these two algorithms as we do for the real datasets, which we discuss next.

\paragraph*{For the above dataset and the real datasets:} The input is read from the input file and attributes are
read from the attribute file, one data point at a time, and fed to the
algorithms.  This is done in order to be able to handle the 100 GB dataset.  Using Python's multiprocessing library, we are able to use
four cores of the
processor~\footnote{\url{https://www.praetorian.com/blog/multi-core-and-distributed-programming-in-python}}.

\subsection{Real Datasets}
We use three real world
datasets: Celeb-A~\cite{liu_etal15}, Sushi~\cite{sushi_dataset}, and
Adult~\cite{adults_dataset}, with $n = 1000$ by selecting the first 1000 data
points (see Table~\ref{tab:real}).
\begin{table*}[t]
  \caption{Comparison of solution quality of algorithms for fair $k$-center on
    real datasets.  Each column after the third corresponds to an algorithm and
    shows ratio of its cost and Gonzalez's lower bound.  Note that this is not
    the approximation ratio.  Our two-pass algorithm is the best for majority of
    the settings. Dark shaded cell shows the best-cost algorithm and lightly shaded cell shows the second best.}
  \label{tab:real}
  \centering
    \begin{tabular}{|l|l|p{2cm}|l|l|p{2cm}|l|l|} 
      \hline
      Dataset & Capacities & Gonzalez's Lower Bound & Chen et al. & Kale & Kleindessner et al. & Two pass & Distributed \\ \hline
      CelebA & [2, 2] & \hfill 30142.4 & 1.9 & 1.9 & \cellcolor{gray!25!white}1.85 & \cellcolor{gray!80!white}1.76 & \cellcolor{gray!80!white}1.76 \\ \hline
      CelebA & [2, 2, 2, 2] & \hfill 28247.3 & 2.0 & 2.0 & \cellcolor{gray!25!white}1.9 & \cellcolor{gray!80!white}1.88 & \cellcolor{gray!80!white}1.88 \\ \hline
      SushiA & [2, 2] & \hfill 11.0 & 2.18 & 2.18 & 2.27 & \cellcolor{gray!80!white}2.0 & \cellcolor{gray!40!white}2.09 \\ \hline
      SushiA & [2] * 6 & \hfill 8.5 & \cellcolor{gray!25!white}2.35 & \cellcolor{gray!25!white}2.35 & \cellcolor{gray!80!white}2.24 & \cellcolor{gray!25!white}2.35 & \cellcolor{gray!80!white}2.24 \\ \hline
      SushiA & [2] * 12 & \hfill 7.5 & \cellcolor{gray!25!white}2.13 & \cellcolor{gray!25!white}2.13 & \cellcolor{gray!80!white}2.0 & 2.4 & 2.4 \\ \hline
      SushiB & [2, 2] & \hfill 36.5 & \cellcolor{gray!80!white}1.81 & \cellcolor{gray!80!white}1.81 & 2.11 & \cellcolor{gray!80!white}1.81 & \cellcolor{gray!25!white}1.86 \\ \hline
      SushiB & [2] * 6 & \hfill 34.0 & 2.0 & \cellcolor{gray!25!white}1.82 & 2.12 & \cellcolor{gray!80!white}1.79 & 2.0 \\ \hline
      SushiB & [2] * 12 & \hfill 32.0 & \cellcolor{gray!80!white}1.94 & \cellcolor{gray!80!white}1.94 & \cellcolor{gray!25!white}2.09 & \cellcolor{gray!80!white}1.94 & \cellcolor{gray!80!white}1.94 \\ \hline
      Adult & [2, 2] & \hfill 4.9 & 2.04 & 2.13 & 2.44 &\cellcolor{gray!80!white} 1.9 & \cellcolor{gray!25!white}2.02 \\ \hline
      Adult & [2] * 5 & \hfill 3.92 & 2.66 & 2.66 & \cellcolor{gray!80!white}2.02 & 2.36 & \cellcolor{gray!25!white}2.35 \\ \hline
      Adult & [2] * 10 & \hfill 2.76 & 2.75 & \cellcolor{gray!80!white}2.41 & \cellcolor{gray!25!white}2.48 & \cellcolor{gray!25!white}2.48 & 2.75 \\ \hline
    \end{tabular}

\end{table*}

Celeb-A dataset is a set of 202,599 images of human faces with attributes
including male/female and young/not-young, which we use.  We use Keras to
extract features from each image~\cite{feature_extraction} via the pretrained
neural network VGG16, which returns a 15360 dimensional real vector for each
image.  We use the $\ell_1$ distance as the metric and two settings of groups:
male/female with capacity of $2$ each (denoted by $[2,2]$ in
Table~\ref{tab:real}), and \{male, female\} $\times$ \{young, not-young\} with
capacity of $2$ each (denoted by $[2] * 4$ in Table~\ref{tab:real}).

Sushi dataset is about preferences for different types of Sushis by 5000
individuals with attributes of male/female and six possible age-groups.  In
SushiB, the preference is given by a score whereas in SushiA, the preference is
given by an order.  For SushiB, we use the $\ell_1$ distance whereas for SushiA, we
use the number of inversions, i.e., the distance between two Sushi rankings is
the number of doubletons $\{i, j\}$ such that Sushi $i$ is preferred over Sushi
$j$ by one ranking and not the other.  For both SushiA and SushiB, we use three
different group settings: with gender only, with age group only, and combination
of gender and age group.  This results in $2$, $6$, and $12$ groups,
respectively, and the capacities appear as $[2, 2]$, $[2] * 6$, and $[2] * 12$,
respectively, in Table~\ref{tab:real}.

Motivated by Kleindessner et al., we consider the adult dataset~\cite{adults_dataset}, which is extracted from US census data and
contains male/female attribute and six numerical attributes that we use as
features.  We normalize this dataset to have zero mean and standard deviation of
one and use the $\ell_1$ distance as the metric.  There are two attributes that can
be used to generate groups: gender and race (Black, White, Asian Pacific
Islander, American Indian Eskimo, and Other).  Individually and in combination,
this results in $2$, $5$, and $10$ groups, respectively.

For comparison, see Table~\ref{tab:real}.  On majority of
settings, our two-pass algorithm outputs a solution with cost smaller than the
rest.  We reiterate for emphasis that in addition to being at least as good as
the best in terms of solution quality, our algorithms can handle massive
datasets.

For the distributed algorithm, we use block size of 25, i.e., the number of
processors are $1000/25 = 40$: theoretically, using $\approx\sqrt{n}$ processor gives maximum speedup.

%%% Local Variables:
%%% mode: latex
%%% TeX-master: "main"
%%% End:

\subsection{Synthetic Datasets}
Motivated by the experiments in Kleindessner et al.,
we use the Erd\H{o}s-R{\'e}nyi  graph metric to compare the running time and cost of our algorithms with existing algorithms. 
%\cite{kleindessner_etal19a,chen_etal16,kale19} 
For a fixed natural number $n$, a random metric on $n$ points is generated as follows. First, a random undirected graph on $n$ vertices is sampled in which each edge is independently picked with probability $2\log n/n$. 
Second, every edge is assigned a uniformly random weight in $(0,1000)$. 
The points in the metric correspond to the vertices of the graph, and the pairwise distances between the points are given by the shortest path distance. 
%between the corresponding vertices.
In addition, if $m$ is the number of groups, then each point in the metric is assigned a group in $\{1,2,\ldots,m\}$ uniformly and independently at random.
%The values for $m$ and the capacities of each group varies from experiment to experiment, which we specify and discuss in detail below.

%We perform two sets of experiments, one each for the running time and cost comparisons.
% to compare the running time and the approximation ratios of our algorithms Algorithms \ref{alg_2pass}, \ref{alg_master} with other algorithms in literature.
\begin{figure}[h]
\centering
  \includegraphics[scale=0.5]{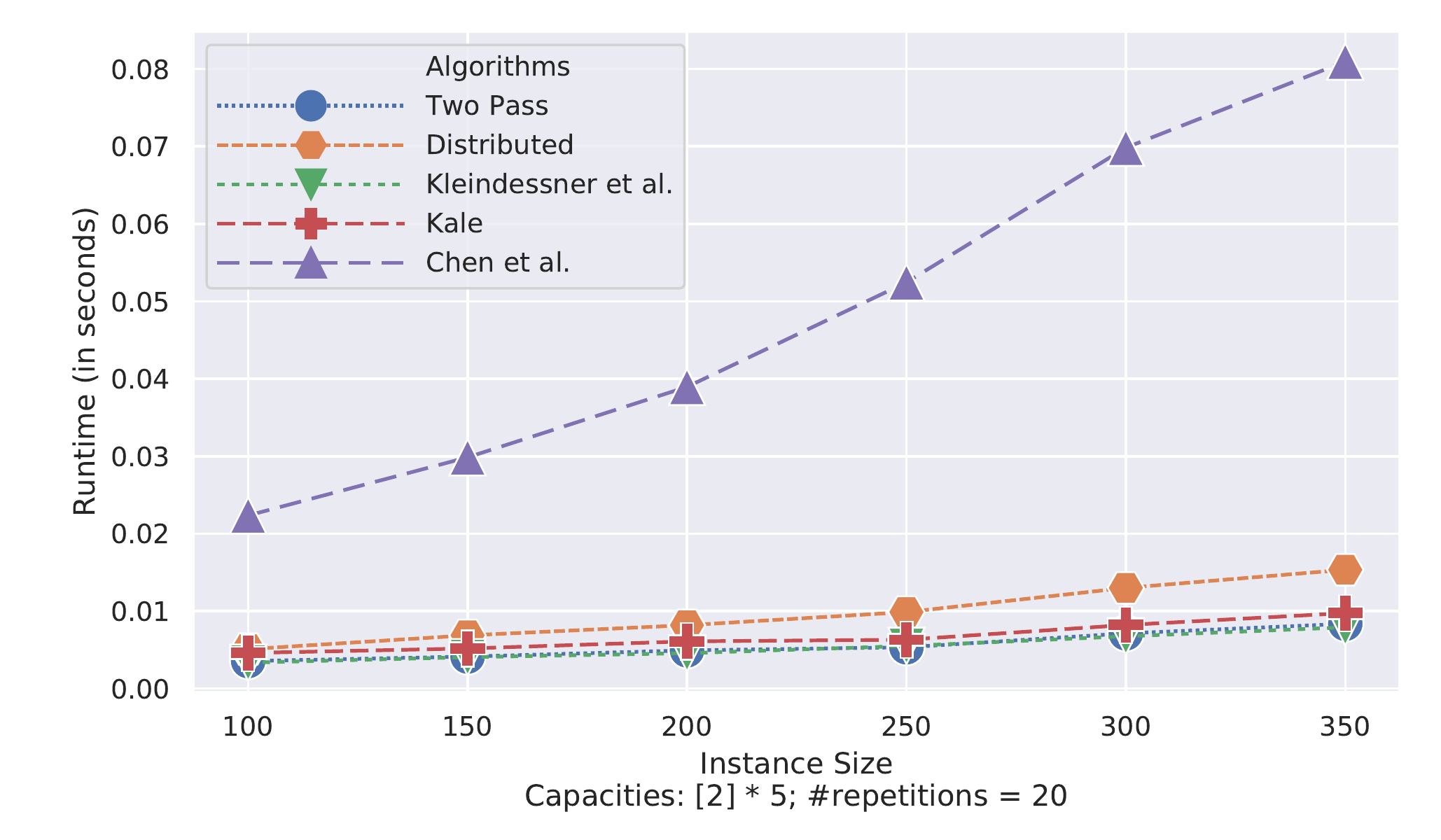}
  \includegraphics[scale=0.55]{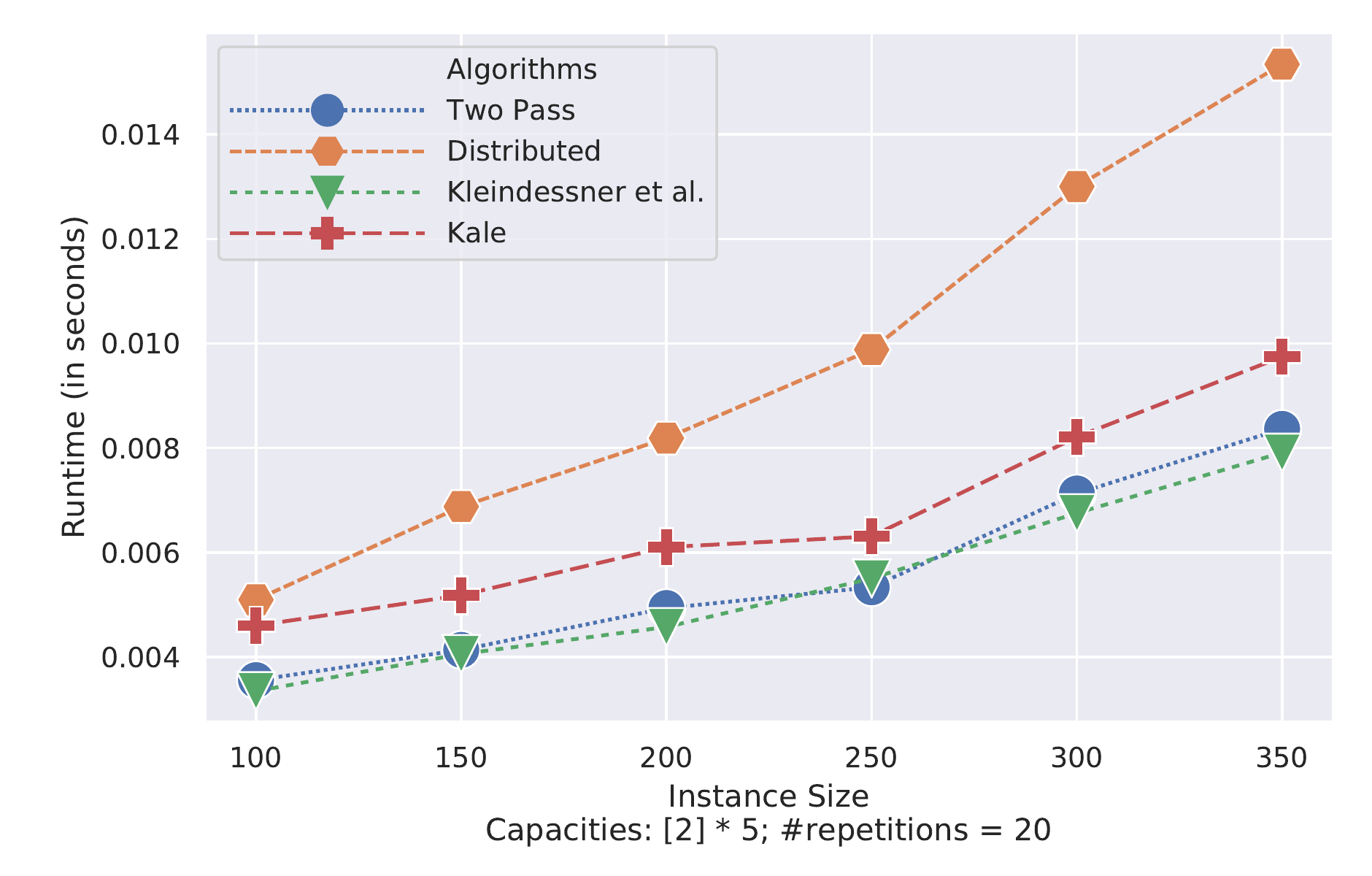}
  \vspace{-0.5cm}
  \caption{Comparing Running Times}
  \label{fig:runtimes}
\end{figure}
%\begin{figure}[h]
%\centering
%\begin{minipage}{.4\textwidth}
%  \includegraphics[width=\linewidth]{running_time.pdf}
%  \caption{Comparing Running Times I}
%  \label{fig:runtimes}
%   \end{minipage}
%   \\
%   \begin{minipage}{.4\textwidth}
%  \includegraphics[width=\linewidth]{running_time_zoom.pdf}
%  \caption{Comparing Running Times II}
%  \label{fig:runtimes_zoom}
%   \end{minipage}
%\end{figure}
Figure \ref{fig:runtimes} shows the plots between the running time and instance size $n$; the bottom one is a zoom-in of the top one to the lower four plots.
In this experiment, $n$ takes values in $\{100,150,200,\ldots,350\}$.
The number of groups is fixed to $5$ and the capacity of each group is $2$.
For each fixing of $n$, we run the five algorithms on $20$ independent random metric instances of size $n$ to compute the average running time.
Our two pass algorithm and Kleindessner et al.'s algorithm are the fastest.
Our distributed algorithm is faster than Chen et al.'s algorithm, but slower than Kale's.
\begin{figure}[h]
  \centering
  \includegraphics[trim={0.0cm 0.5cm 0 0.73cm},clip,scale=0.65]{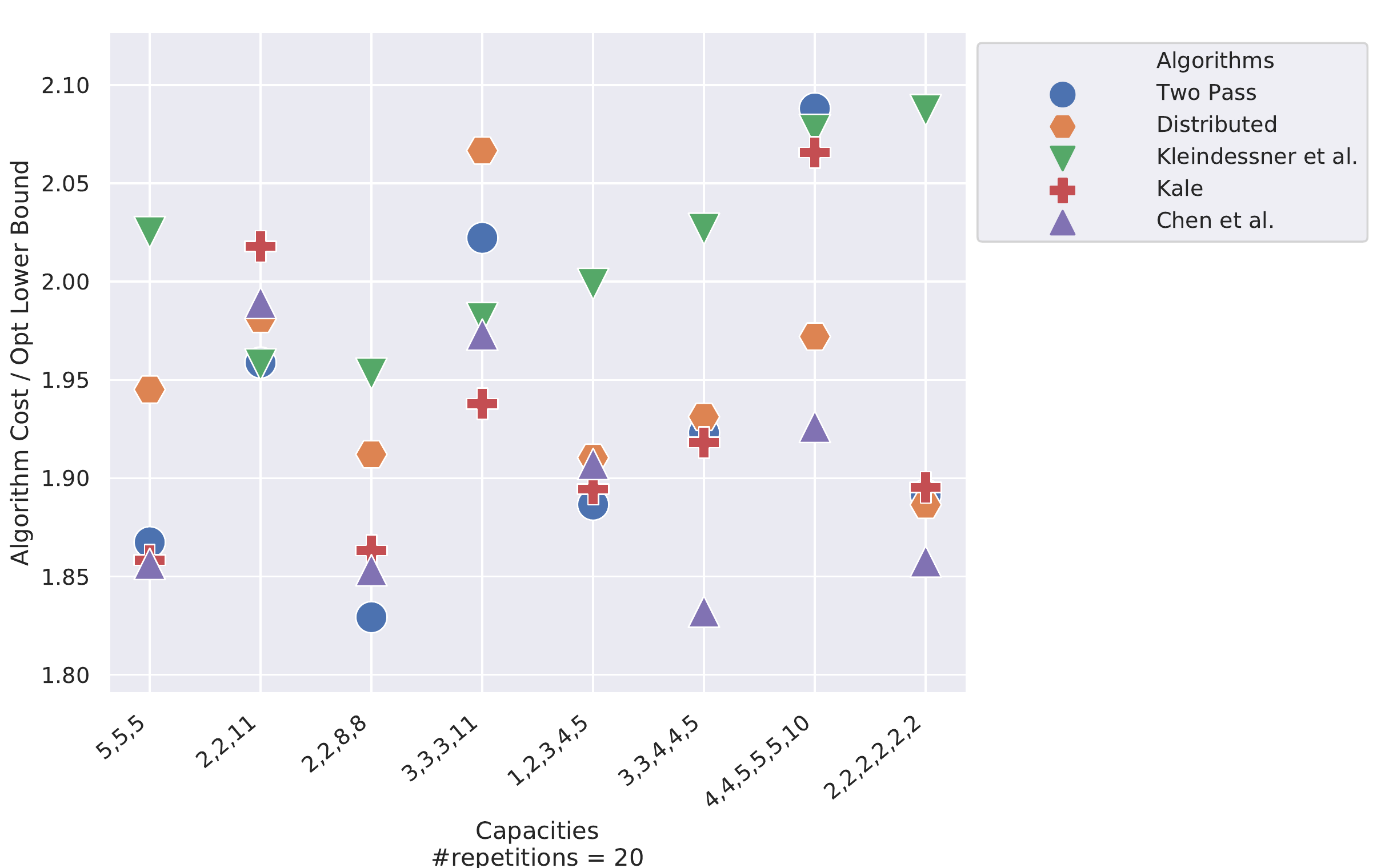}
  \caption{Comparing Approximation Ratios}
  \label{fig:approx}
\end{figure}

Figure \ref{fig:approx} shows the ratios of the cost of various algorithms to Gonzalez's lower bound.
For this comparison, the instance size is fixed to $500$ and capacities are $[5, 5, 5], [2, 2, 11], [2, 2, 8, 8], [3, 3, 3, 11], [1, 2, 3, 4, 5]$, $[3, 3, 4, 4, 5], [4, 4, 5, 5, 5, 10], [2, 2, 2, 2, 2, 2].$
Here again, for every fixing of the capacities, the algorithm is run on $20$ independent random metric instances to compute the average costs.
Chen et al.'s algorithm achieves the least cost for almost all settings, and Kleindessner et al.'s algorithm gives the highest cost on majority (5 out of 8) of settings.
Our two-pass algorithm and Kale's algorithm perform similar to each other and are quite close to Chen et al.'s.  Our distributed algorithm is somewhere in between Chen et al.'s and Kleindessner et al.'s.
Note that the ratios of the costs between any two algorithms is at most $1.167$.

In the implementation of our two pass algorithm, we use geometric guessing starting with the simple lower bound until the algorithm returns a success instead of running all guesses.  This is done for a fair comparison in terms of running time.

\section{Research Directions}

One research direction is to improve
the theoretical bounds, e.g., get a better approximation ratio in the
distributed setting or prove a better hardness result.  Another interesting
direction is to use fair $k$-center for fair rank aggregation using the number
of inversions between two rankings as the metric.

%%% Local Variables:
%%% mode: latex
%%% TeX-master: "main"
%%% End:

\bibliographystyle{alpha}
\bibliography{references}

\newcommand{\etalchar}[1]{$^{#1}$}
\begin{thebibliography}{MKC{\etalchar{+}}15}

\bibitem[AEKM19]{AhmadianE0M19}
Sara Ahmadian, Alessandro Epasto, Ravi Kumar, and Mohammad Mahdian.
\newblock Clustering without over-representation.
\newblock In {\em Proceedings of the 25th ACM SIGKDD International Conference
  on Knowledge Discovery \& Data Mining}, pages 267--275, 2019.

\bibitem[BCFN19]{BeraCFN19}
Suman~Kalyan Bera, Deeparnab Chakrabarty, Nicolas Flores, and Maryam Negahbani.
\newblock Fair algorithms for clustering.
\newblock In Hanna~M. Wallach, Hugo Larochelle, Alina Beygelzimer, Florence
  d'Alch{\'{e}}{-}Buc, Emily~B. Fox, and Roman Garnett, editors, {\em Advances
  in Neural Information Processing Systems 32: Annual Conference on Neural
  Information Processing Systems 2019, NeurIPS 2019, 8-14 December 2019,
  Vancouver, BC, Canada}, pages 4955--4966, 2019.

\bibitem[BIPV19]{bandyapadhyay_et_al19}
Sayan Bandyapadhyay, Tanmay Inamdar, Shreyas Pai, and Kasturi Varadarajan.
\newblock {A Constant Approximation for Colorful k-Center}.
\newblock In Michael~A. Bender, Ola Svensson, and Grzegorz Herman, editors,
  {\em 27th Annual European Symposium on Algorithms (ESA 2019)}, volume 144 of
  {\em Leibniz International Proceedings in Informatics (LIPIcs)}, pages
  12:1--12:14, Dagstuhl, Germany, 2019. Schloss Dagstuhl--Leibniz-Zentrum fuer
  Informatik.

\bibitem[CKLV17]{chierichetti_etal17}
Flavio Chierichetti, Ravi Kumar, Silvio Lattanzi, and Sergei Vassilvitskii.
\newblock Fair clustering through fairlets.
\newblock In I.~Guyon, U.~V. Luxburg, S.~Bengio, H.~Wallach, R.~Fergus,
  S.~Vishwanathan, and R.~Garnett, editors, {\em Advances in Neural Information
  Processing Systems 30}, pages 5029--5037. Curran Associates, Inc., 2017.

\bibitem[CKS{\etalchar{+}}18]{celis_etal18a}
Elisa Celis, Vijay Keswani, Damian Straszak, Amit Deshpande, Tarun Kathuria,
  and Nisheeth Vishnoi.
\newblock Fair and diverse {DPP}-based data summarization.
\newblock In Jennifer Dy and Andreas Krause, editors, {\em Proceedings of the
  35th International Conference on Machine Learning}, volume~80 of {\em
  Proceedings of Machine Learning Research}, pages 716--725, Stockholmsmässan,
  Stockholm Sweden, 10--15 Jul 2018. PMLR.

\bibitem[CLLW16]{chen_etal16}
Danny~Z. Chen, Jian Li, Hongyu Liang, and Haitao Wang.
\newblock Matroid and knapsack center problems.
\newblock {\em Algorithmica}, 75(1):27--52, May 2016.

\bibitem[fea]{feature_extraction}
Keras: Extract features with vgg16.
\newblock \url{https://keras.io/applications/#extract-features-with-vgg16}.
\newblock Accessed: 2020-01-26.

\bibitem[Gon85]{gonzalez85}
Teofilo~F. Gonzalez.
\newblock Clustering to minimize the maximum intercluster distance.
\newblock {\em Theoretical Computer Science}, 38:293--306, 1985.

\bibitem[Guh09]{guha09}
Sudipto Guha.
\newblock Tight results for clustering and summarizing data streams.
\newblock In {\em Proc. 12th International Conference on Database Theory}, ICDT
  '09, pages 268--275, 2009.

\bibitem[HKK10]{hajiaghayi_etal10}
MohammadTaghi Hajiaghayi, Rohit Khandekar, and Guy Kortsarz.
\newblock Budgeted red-blue median and its generalizations.
\newblock In {\em Proceedings of the 18th Annual European Conference on
  Algorithms: Part I}, ESA'10, pages 314--325. Springer-Verlag, 2010.

\bibitem[HN79]{hsu79}
Wen-Lian Hsu and George~L. Nemhauser.
\newblock Easy and hard bottleneck location problems.
\newblock {\em Discrete Applied Mathematics}, 1(3):209 -- 215, 1979.

\bibitem[HS85]{hochbaum85}
Dorit~S. Hochbaum and David~B. Shmoys.
\newblock A best possible heuristic for the k-center problem.
\newblock {\em Math. Oper. Res.}, 10(2):180--184, May 1985.

\bibitem[JSS20]{jia_etal20}
Xinrui Jia, Kshiteej Sheth, and Ola Svensson.
\newblock Fair colorful k-center clustering.
\newblock In {\em To appear in IPCO'20}, 2020.

\bibitem[Kal19]{kale19}
Sagar Kale.
\newblock {Small Space Stream Summary for Matroid Center}.
\newblock In {\em APPROX/RANDOM}, volume 145 of {\em Leibniz International
  Proceedings in Informatics (LIPIcs)}, pages 20:1--20:22, Dagstuhl, Germany,
  2019. Schloss Dagstuhl--Leibniz-Zentrum fuer Informatik.

\bibitem[KAM19]{kleindessner_etal19a}
Matth{\"a}us Kleindessner, Pranjal Awasthi, and Jamie Morgenstern.
\newblock Fair k-center clustering for data summarization.
\newblock In {\em ICML}, volume~97, pages 3448--3457, Long Beach, California,
  USA, 09--15 Jun 2019. PMLR.

\bibitem[KB]{adults_dataset}
Ronny Kohavi and Barry Becker.
\newblock Adult data set.
\newblock \url{https://archive.ics.uci.edu/ml/datasets/Adult}.
\newblock Accessed: 2020-01-26.

\bibitem[KKN{\etalchar{+}}11]{krishnaswamy_etal11}
Ravishankar Krishnaswamy, Amit Kumar, Viswanath Nagarajan, Yogish Sabharwal,
  and Barna Saha.
\newblock The matroid median problem.
\newblock In {\em Proceedings of the Twenty-second Annual ACM-SIAM Symposium on
  Discrete Algorithms}, SODA '11, pages 1117--1130, 2011.

\bibitem[KMM15]{kay_etal15}
Matthew Kay, Cynthia Matuszek, and Sean~A. Munson.
\newblock Unequal representation and gender stereotypes in image search results
  for occupations.
\newblock In {\em CHI}, pages 3819--3828, 2015.

\bibitem[LLWT15]{liu_etal15}
Ziwei Liu, Ping Luo, Xiaogang Wang, and Xiaoou Tang.
\newblock Deep learning face attributes in the wild.
\newblock In {\em Proceedings of International Conference on Computer Vision
  (ICCV)}, December 2015.

\bibitem[MKC{\etalchar{+}}15]{malkomes1_etal15}
Gustavo Malkomes, Matt~J Kusner, Wenlin Chen, Kilian~Q Weinberger, and Benjamin
  Moseley.
\newblock Fast distributed k-center clustering with outliers on massive data.
\newblock In {\em NIPS}, pages 1063--1071. 2015.

\bibitem[SSS20]{schmidt_etal20}
Melanie Schmidt, Chris Schwiegelshohn, and Christian Sohler.
\newblock Fair coresets and streaming algorithms for fair k-means.
\newblock In Evripidis Bampis and Nicole Megow, editors, {\em Approximation and
  Online Algorithms}, pages 232--251, Cham, 2020. Springer International
  Publishing.

\bibitem[sus]{sushi_dataset}
Sushi preference data sets.
\newblock \url{http://www.kamishima.net/sushi/}.
\newblock Accessed: 2020-01-26.

\end{thebibliography}

\appendix
\section{Algorithms}
\label{app:alg}
The definition of clustering cost (Definition 1) immediately implies the following observations.

\begin{observation}\label{obs_inclusion}
Let $A\supseteq A'$ and $B\subseteq B'$ be sets of points in a metric space given by a distance function $d$. The clustering cost of $A$ for $B$ is at most the clustering cost of $A'$ for $B'$.
\end{observation}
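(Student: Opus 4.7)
The plan is to argue directly from the definition of clustering cost, exploiting the monotonicity of $\max$ (in the set of candidates) and of $\min$ (reversed, in the set of candidates). By Definition~\ref{def_cost}, the clustering cost of $A$ for $B$ is $\max_{b \in B} \min_{a \in A} d(a, b)$, and similarly the clustering cost of $A'$ for $B'$ is $\max_{b \in B'} \min_{a' \in A'} d(a', b)$. I want to show the former is at most the latter.

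First I would handle the inner minimum. Fix any point $b$. Since $A \supseteq A'$, the set over which the inner $\min$ is taken on the left is a superset of the set on the right, so $\min_{a \in A} d(a, b) \leq \min_{a' \in A'} d(a', b)$; adding candidates can only lower a minimum. Next I would handle the outer maximum. Since $B \subseteq B'$, the maximum on the left is taken over a subset of the maximum on the right; restricting the domain of a $\max$ can only lower its value. Composing these two steps gives
\[
\max_{b \in B} \min_{a \in A} d(a, b) \;\leq\; \max_{b \in B} \min_{a' \in A'} d(a', b) \;\leq\; \max_{b \in B'} \min_{a' \in A'} d(a', b),
\]
which is exactly the desired inequality.

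There is no real obstacle here: the statement is pure monotonicity of nested $\max$/$\min$ expressions under set inclusion, and neither the triangle inequality nor any metric property beyond $d$ being a nonnegative function is used. The only point worth being careful about is the direction of the inclusions---a common slip would be to conflate the roles of $A, A'$ and $B, B'$, since the inclusions go in opposite directions, but the asymmetry is forced by the asymmetric roles of centers and covered points in the definition.
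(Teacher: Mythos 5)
Your proof is correct and is exactly the argument the paper has in mind: the paper states this observation as following ``immediately'' from Definition~\ref{def_cost} without spelling out a proof, and the monotonicity of the inner $\min$ under enlarging $A$ together with the monotonicity of the outer $\max$ under enlarging $B$ is precisely that immediate argument. Your remark about the opposite directions of the two inclusions is the right thing to be careful about, and you have them the right way around.
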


\begin{observation}\label{obs_union}
Let $A_1,A_2,B_1,B_2$ be sets of points in a metric space given by a distance function $d$. Suppose the clustering cost of each $A_i$ for $B_i$ is at most $\tau$. Then the clustering cost of $A_1\cup A_2$ for $B_1,\cup B_2$ is at most $\tau$.
\end{observation}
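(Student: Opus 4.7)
The plan is to unfold the definition of clustering cost and observe that the maximum over a union of sets equals the maximum of the maxima over the constituent sets, so it suffices to bound the cost of $A_1 \cup A_2$ for each $B_i$ separately.

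To that end, I will first apply Observation~\ref{obs_inclusion} with $A' = A_i$ and $B' = B_i$ (and $A = A_1 \cup A_2 \supseteq A_i$, $B = B_i$): this yields that the clustering cost of $A_1 \cup A_2$ for $B_i$ is at most the clustering cost of $A_i$ for $B_i$, which by hypothesis is at most $\tau$. Thus for each $i \in \{1,2\}$ we have $\max_{b \in B_i} \min_{a \in A_1 \cup A_2} d(a,b) \leq \tau$.

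Then I will conclude by noting that, directly from Definition~\ref{def_cost}, the clustering cost of $A_1 \cup A_2$ for $B_1 \cup B_2$ equals
\[
\max_{b \in B_1 \cup B_2} \min_{a \in A_1 \cup A_2} d(a,b) = \max\Bigl(\max_{b \in B_1} \min_{a \in A_1 \cup A_2} d(a,b),\; \max_{b \in B_2} \min_{a \in A_1 \cup A_2} d(a,b)\Bigr),
\]
and both terms inside the outer maximum are at most $\tau$ by the previous step. There is essentially no obstacle here; the entire content is that enlarging the set of candidate centers cannot increase the cost (the observation just invoked) combined with the trivial identity $\max_{x \in X \cup Y} f(x) = \max(\max_{x \in X} f(x), \max_{x \in Y} f(x))$. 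The statement does not rely on the triangle inequality, only on the structure of the $\max$-$\min$ definition.
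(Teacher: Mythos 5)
Your proof is correct and is exactly the routine unfolding of Definition~\ref{def_cost} that the paper leaves implicit (the paper states this observation without proof, as following immediately from the definition). Splitting the outer maximum over $B_1\cup B_2$ and invoking Observation~\ref{obs_inclusion} to enlarge the center set is precisely the intended argument, so there is nothing to add.
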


The following lemma follows easily from the triangle inequality.

\begin{lemma}[Lemma 1 from the paper, restated]
Let $A,B,C\subseteq X$. The clustering cost of $A$ for $C$ is at most the clustering cost of $A$ for $B$ plus the clustering cost of $B$ for $C$.
\end{lemma}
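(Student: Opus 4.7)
The plan is to unfold the definition of clustering cost and chase the quantifiers, using the triangle inequality on a point-by-point basis. Let $\alpha$ denote the clustering cost of $A$ for $B$ and $\beta$ denote the clustering cost of $B$ for $C$. By Definition~\ref{def_cost}, $\alpha = \max_{b\in B}\min_{a\in A} d(a,b)$ and $\beta = \max_{c\in C}\min_{b\in B} d(b,c)$, and I need to show $\max_{c\in C}\min_{a\in A} d(a,c) \leq \alpha + \beta$.

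First I would fix an arbitrary $c \in C$ and produce a single witness in $A$ whose distance to $c$ is at most $\alpha + \beta$. Specifically, choose $b \in B$ attaining $\min_{b'\in B} d(b',c)$; by the definition of $\beta$, we have $d(b,c) \leq \beta$. Next, choose $a \in A$ attaining $\min_{a'\in A} d(a',b)$; by the definition of $\alpha$, we have $d(a,b) \leq \alpha$. The triangle inequality then gives $d(a,c) \leq d(a,b) + d(b,c) \leq \alpha + \beta$. Since $a \in A$, we have $\min_{a'\in A} d(a',c) \leq d(a,c) \leq \alpha + \beta$.

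Finally, since $c$ was arbitrary, taking the maximum over $c \in C$ yields $\max_{c\in C}\min_{a\in A} d(a,c) \leq \alpha + \beta$, which is exactly the claim. There is no real obstacle here; the only thing worth being careful about is the alternation of quantifiers (\emph{max} over the set being covered, \emph{min} over the centers), which is handled cleanly by picking the witnesses $b$ and $a$ in the correct order so that the two inequalities $d(b,c)\leq\beta$ and $d(a,b)\leq\alpha$ apply to the same pair of points used in the triangle inequality.
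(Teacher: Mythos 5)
Your proof is correct and follows essentially the same route as the paper's: fix a point to be covered, chain through a nearby intermediate point of $B$, and apply the triangle inequality. If anything, your write-up handles the quantifier direction more carefully than the paper's own proof (which begins ``for every $a\in A$, there exists $b\in B$\dots'', reversing the roles of centers and covered points relative to Definition~1), but the underlying argument is identical.
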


\begin{proof}
Let $d$ be the metric and let $r_{AB}$ and $r_{BC}$ denote the clustering costs of $A$ for $B$ and of $B$ for $C$ respectively. For every $a\in A$, there exists $b\in B$ such that $d(a,b)\leq r_{AB}$. But for this $b$, there exists $c\in C$ such that $d(b,c)\leq r_{BC}$. Thus, for every $a\in A$, there exists a $c\in C$ such that $d(a,c)\leq r_{AB}+r_{BC}$, by the triangle inequality. This proves the claim.
\end{proof}

% \begin{proof}[Proof of Lemma \ref{lem_pigeonhole}]
% Let $d$ denote the metric. Let $Q$ be a set of $k$ points whose clustering cost for $T$ is $r$. For every $p\in P$ there exists $q\in T$ such that $d(p,q)\leq r$. Suppose $|P|>k=|Q|$. Then there exist two points, say $p_1,p_2\in P$ and a point $q\in Q$ such that $d(p_1,q)\leq r$ and $d(p_2,q)\leq r$. But by the separation guarantee on $P$, $d(p_1,p_2)>2r$. This contradicts the triangle inequality.
% \end{proof}

The pseudocodes of procedures \texttt{getPivots}$()$, \texttt{getReps}$()$, and \texttt{HittingSet}$()$ are given by Algorithms~\ref{alg_pivots},~\ref{alg_representatives}, and~\ref{alg_bmatching} respectively.

\begin{algorithm}[htb]
\caption{\texttt{getPivots}$(T,d,r)$}
\label{alg_pivots}
\begin{algorithmic}
\State {\bfseries Input:} Set $T$ with metric $d$, radius $r$.
\State $P\gets\{p\}$ where $p$ is an arbitrary point in $T$.
\For{{\bfseries each} $q\in T$ (in an arbitrary order)}
    \If{$\min_{p\in P}d(p,q)>r$}
        \State $P\gets P\cup\{q\}$.
    \EndIf
\EndFor
\State {\bfseries Return} $P$.
\end{algorithmic}
\end{algorithm}

\begin{algorithm}[htb]
\caption{\texttt{getReps}$(T,d,g,P,r)$}
\label{alg_representatives}
\begin{algorithmic}
\State {\bfseries Input:} Set $T$ with metric $d$, group assignment function $g$, subset $P\subseteq T$, radius $r$.
\For{{\bfseries each} $p\in P$}
    \State $I_p\gets\{p\}$.
\EndFor
\For{{\bfseries each} $q\in T$ (in an arbitrary order)}
    \For{{\bfseries each} $p\in P$}
        \If{$d(p,q)\leq r$ and $I_p$ doesn't contain a point from $q$'s group}
            \State $I_p\gets I_p\cup\{q\}$.
        \EndIf
    \EndFor
\EndFor
\State {\bfseries Return} $\{I_p:p\in P\}$.
\end{algorithmic}
\end{algorithm}

\begin{algorithm}[htb]
\caption{\texttt{HittingSet}$(\mathcal{N},g,\overline{k})$}
\label{alg_bmatching}
\begin{algorithmic}
\State {\bfseries Input:} Collection $\mathcal{N}=(N_1,\ldots,N_K)$ of pairwise disjoint sets of points, group assignment function $g$, vector $\overline{k}=(k_1,\ldots,k_m)$ of capacities.
\State Construct bipartite graph $G=(\mathcal{N},V,E)$ as follows.
\State $V$ $\gets$ $\biguplus_{j=1}^mV_i$, where $V_j$ is a set of $k_j$ vertices.
\For{{\bfseries each} $N_i$ and {\bfseries each} group $j$}
    \If{$\exists$ $p\in N_i$ such that $g(p)=j$}
        \State Connect $N_i$ to all vertices in $V_j$.
    \EndIf
\EndFor
\State Find the maximum cardinality matching $H$ of $G$.
\State $C\gets\emptyset$.
\For{{\bfseries each} edge $(N_i,v)$ of $H$}
    \State Let $p$ be a point in $N_i$ from group $j$, where $v\in V_j$.
    \State $C\gets C\cup\{p\}$.
\EndFor
\State {\bfseries Return} $C$.
\end{algorithmic}
\end{algorithm}

\begin{observation}\label{obs_getPivots}
The procedure \texttt{getPivots}$()$ performs a single pass over the input set $T$. The set $P$ returned by \texttt{getPivots}$(T,d,r)$ contains points separated pairwise by distance more than $r$. The clustering cost of $P$ for $T$ is at most $r$. Therefore, by Lemma 2 from the paper, if there is a set of $k$ points whose clustering cost for $T$ is at most $r/2$, then $|P|\leq k$ pivots.
\end{observation}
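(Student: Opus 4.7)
The plan is to verify each of the four claims in the observation essentially by direct inspection of Algorithm~\ref{alg_pivots}, invoking Lemma~\ref{lem_pigeonhole} only for the final claim.

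First, I would observe that the pseudocode consists of a single \textbf{for each} $q \in T$ loop with a constant-work conditional update inside, so the single-pass property is immediate from the structure of the algorithm. Second, for pairwise separation, I would argue by induction on the order of insertion: whenever a new point $q$ is added to $P$, the test $\min_{p \in P} d(p,q) > r$ ensures $d(p,q) > r$ for every $p$ already in $P$; since insertions only grow $P$, this guarantees that for every pair of distinct points $q, q' \in P$ with $q$ inserted before $q'$, the separation $d(q,q') > r$ holds, establishing the pairwise-separation claim.

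Third, for the clustering cost bound, I would let $q \in T$ be arbitrary and split on cases: if $q \in P$, then trivially $\min_{p \in P} d(p,q) = 0 \le r$; if $q \notin P$, then at the moment $q$ was processed, the condition $\min_{p \in P} d(p,q) > r$ must have failed, meaning there was some $p \in P$ (at that time) with $d(p,q) \le r$. Since $P$ only grows monotonically during the execution, this $p$ still lies in the final $P$, so $\min_{p \in P} d(p,q) \le r$ holds for the final $P$ as well. Taking the maximum over $q \in T$ gives clustering cost at most $r$.

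Finally, for the pivot count bound, I would directly invoke Lemma~\ref{lem_pigeonhole} with the set ``$P$'' from that lemma instantiated as the pivots returned here, the role of ``$T$'' played by $T$ itself, ``$S$'' by the assumed set of $k$ points, and $\rho = r/2$: by the already-established pairwise-separation claim, the pivots are separated by distance more than $r = 2\rho$, so the pigeonhole lemma yields $|P| \le k$.

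I do not expect any real obstacle; each of the four claims follows either by direct reading of the pseudocode or by a one-line invocation of an earlier lemma. The only minor care needed is in step three, where I must use the monotonicity of $P$ during the execution to transfer the ``covered within distance $r$'' property from the historical $P$ at the time $q$ was processed to the final returned $P$.
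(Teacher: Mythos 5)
Your proposal is correct and follows essentially the same route as the paper: the paper also treats the single-pass, separation, and cost claims as immediate from the maximality of $P$ under the insertion test, and obtains $|P|\leq k$ by invoking Lemma~2 (the pigeonhole lemma) with $\rho=r/2$ exactly as you do. Your explicit monotonicity argument in step three is just a careful spelling-out of what the paper calls ``maximality of $P$.''
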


\begin{observation}\label{obs_representatives}
The procedure \texttt{getRep}$()$ executes a single pass over the input set $T$. The points in each set $I_p$ returned by \texttt{getRep}$(T,d,g,P,r)$ belong to distinct groups and are all within distance $r$ from $p$. For every point $q$ within distance $r$ from $p\in P$, $I_p$ contains a point in the same group as $q$ (possibly $q$ itself).
\end{observation}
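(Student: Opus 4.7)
The plan is to verify each of the four claims in the observation by direct inspection of Algorithm~\ref{alg_representatives}, noting that the procedure only ever adds points to the sets $I_p$ and never removes them, so any invariant established during the scan survives to the final output.

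First, for the single-pass assertion, I would simply point out that the outer \texttt{for} loop iterates over each $q \in T$ exactly once, and all other work (the inner loop over $P$, distance computations, and membership checks) is local to the processing of $q$. For the distinct-groups assertion, I would use induction on the outer iterations: $I_p$ is initialized as $\{p\}$, which contains exactly one point (namely $p$ itself) of $p$'s group; whenever the algorithm contemplates adding a new point $q$ to $I_p$, it explicitly checks that $I_p$ does not already contain a point of $g(q)$, so the ``at most one point per group'' invariant is preserved.

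Next, for the distance bound, I would observe that the only point ever placed into $I_p$ without a distance check is the initial $p$ itself, which is trivially at distance $0 \le r$ from $p$. Every subsequent insertion is guarded by the condition $d(p,q) \le r$, so every point of $I_p$ lies within distance $r$ from $p$. For the completeness assertion, fix $p \in P$ and any $q \in T$ with $d(p,q) \le r$, and consider the iteration in which the outer loop processes $q$. At that moment, either $I_p$ already contains some point of $g(q)$, in which case the claim is already satisfied and remains so forever (points are never removed), or it does not, in which case the guard $d(p,q) \le r$ is met and $q$ itself is added to $I_p$. In either case the final $I_p$ contains a point of $g(q)$.

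There is essentially no obstacle: the observation is a direct restatement of the invariants maintained by the pseudocode. The only subtle point worth flagging explicitly in the write-up is the monotonicity of the construction (additions only), which is what lets me conclude that invariants verified at the iteration where $q$ is processed continue to hold at the end of the scan.
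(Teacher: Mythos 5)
Your proposal is correct and matches the paper's intent: the paper states this as an observation immediately following the pseudocode of \texttt{getReps}$()$, treating it as evident by direct inspection, and your loop-invariant verification (monotone additions, guarded insertions, and the process-or-already-present dichotomy for the completeness claim) is exactly the justification the paper leaves implicit. There is no gap; your write-up simply makes the intended argument explicit.
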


The procedure \texttt{HittingSet}$()$ constructs the following bipartite graph. The left side vertex set contains $K$ vertices: one for each $N_i$. The right side vertex set is $V=\biguplus_{j=1}^mV_j$, where $V_j$ contains $k_j$ vertices for each group $j$. If $N_i$ contains a point from group $j$, then its vertex is connected to the all of $V_j$. Each matching $H$ in this bipartite graph encodes a feasible subset $C$ of $\biguplus_{i=1}^KN_i$ as follows. For each edge $e=(N_i,v)\in H$ where $v\in V_j$, add to $C$ the point from $N_i$ belonging to group $j$. Observe that since $|V_j|=k_j$ and $H$ is a matching, $C$ contains at most $k_j$ points from group $j$. Moreover, $|C|=|H|$, and hence, a maximum cardinality matching in the bipartite graph encodes a set $C$ intersecting as many of the $N_i$'s as possible.

In our implementation, we enhance the efficienty of \texttt{HittingSet}$()$ as follows. For each group, we introduce only one vertex in the right side vertex set and construct the bipartite graph like \texttt{HittingSet}$()$, directing edges from left to right. We further connect a source to the left side vertices with unit capacity edges, and the right side vertices to a sink with edges of capacities $k_j$. We find the maximum (integral) source-to-sink flow using the Ford-Fulkerson algorithm. 
%Each edge of the bipartite graph carries $0/1$ flow. 
For each $i$ and $j$, if the edge $(N_i,j)$ exists and carries nonzero flow, then we include in $C$ the point in $N_i$ that belongs to group $j$. Our runtime is bounded as follows.

\begin{lemma}
The runtime of \texttt{HittingSet}$()$ is $O(K^2\cdot\max_i|N_i|)$. 
\end{lemma}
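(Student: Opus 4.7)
The plan is to bound the size of the flow network constructed by \texttt{HittingSet}$()$, note that the maximum flow value is small, and then invoke the standard integer-capacity Ford–Fulkerson running time. First, I would write out the network explicitly: a source $s$, a sink $t$, $K$ left vertices (one per $N_i$), and at most $m$ right vertices (one per group). The edges are (i) $s \to N_i$ of unit capacity for each $i \in [K]$; (ii) $N_i \to j$ of unit capacity whenever $N_i$ contains a point from group $j$; (iii) $j \to t$ of capacity $k_j$ for each group $j$.

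Next I would count edges carefully. The number of bipartite edges is at most $\sum_{i=1}^{K} |\{g(p): p \in N_i\}| \le \sum_{i=1}^{K} |N_i| \le K \cdot \max_i |N_i|$, since each point of $N_i$ contributes at most one distinct group. Group vertices that receive no bipartite edge are isolated from $s$ and can be discarded without affecting the maximum flow, so the relevant right side has size at most $K \cdot \max_i |N_i|$. Together with the $K$ source edges and at most $K \cdot \max_i |N_i|$ sink edges, this gives $|E| = O(K \cdot \max_i |N_i|)$.

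Then I would bound the maximum flow $F$: since the only edges leaving $s$ are the $K$ unit-capacity edges to the left vertices, $F \le K$. Ford–Fulkerson with integer capacities augments along at most $F$ paths and each augmentation costs $O(|E|)$, so the total time for the maximum flow computation is $O(|E| \cdot F) = O(K \cdot \max_i |N_i| \cdot K) = O(K^2 \cdot \max_i |N_i|)$. The auxiliary work is dominated: building the network takes a single scan of each $N_i$ to record the groups it hits, costing $O(\sum_i |N_i|) = O(K \cdot \max_i |N_i|)$, and reading off $C$ from the saturated edges costs $O(F) = O(K)$.

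The only subtle point—really the only thing beyond a textbook invocation—is the observation that we must restrict the right side to groups actually appearing in some $N_i$ before counting edges; otherwise a naive bound would give $O(Km)$ edges and a weaker runtime of $O(K^2 m)$ that does not match the claimed $O(K^2 \cdot \max_i |N_i|)$. With this restriction the proof is a direct composition of edge-counting and the $O(|E|\cdot F)$ bound for Ford–Fulkerson.
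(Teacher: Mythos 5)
Your proof is correct and follows essentially the same route as the paper's: bound the number of edges by $O(K\cdot\max_i|N_i|)$, bound the max-flow value by $K$ via the unit-capacity source edges, and apply the $O(|E|\cdot F)$ Ford--Fulkerson bound. Your extra care in discarding group vertices not hit by any $N_i$ (so the edge count does not pick up a factor of $m$) is a detail the paper leaves implicit, but the argument is the same.
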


\begin{proof}
The number of edges in the constructed bipartite graph is $O(K\cdot\max_i|N_i|)$ whereas the value of the max-flow is no more than $K$. The runtime of the Ford-Fulkerson algorithm is of the order of the size of the number of edges times the value of max-flow. Therefore, the runtime of \texttt{HittingSet}$()$, which is dominated by the runtime of the Ford-Fulkerson algorithm, turns out to be $O(K^2\cdot\max_i|N_i|)$.
\end{proof}

% \begin{proof}[Proof of distributed algorithm's runtime]
% Recall that $S^*$ is a feasible set having clustering cost at most $\tau$. For each $q\in P$ let $c_q\in S^*$ denote a point such that $d(q,c_q)\leq\tau$. Since the points in $P$ are separated by distance more than $2\tau$, the points $c_q$ are all distinct. Recall that $N(q)$, the output of \texttt{getReps}$()$, contains one point from every group which has a point within distance $\tau$ from $q$. Therefore, $N(q)$ contains a point, say $b_q$, from the same group as $c_q$ such that $d(q,b_q)\leq\tau$. Consider the set $B=\{b_q:q\in P\}$. This set intersects $N(q)$ for each $q$. Furthermore, $B$ contains exactly as many points from any group as $\{c_q:q\in P\}\subseteq S^*$, and therefore, $B$ is feasible. Thus, there exists a feasible set, namely $B$, intersecting all the pairwise disjoint $N(q)$'s. Recall that $S$, the output of \texttt{HittingSet}$()$, is a feasible set intersecting as many $N(q)$'s as possible. Thus, $S$ also intersects all the $N(q)$'s.
% 
% Now, the clustering cost of $S$ for $P$ is at most $\tau$, because $S$ intersects $N(q)$ for each $q\in P$. The clustering cost of $P$ for $X$ is at most $2\tau$ by the maximality of the set returned by \texttt{getPivots}$()$. These facts and Lemma~\ref{lem_cost_triangle} together imply that the clustering cost of $S$, the output of the algorithm, for $X$ is at most $3\tau$.
% \end{proof}

%%% Local Variables:
%%% mode: latex
%%% TeX-master: "main"
%%% End:

\end{document}